\def\BibTeX{{\rm B\kern-.05em{\sc i\kern-.025em b}\kern-.08emT\kern-.1667em\lower.7ex\hbox{E}\kern-.125emX}}
\let\oldmarginnote\marginnote
\renewcommand*{\marginnote}[1]{
   \begingroup
   \ifodd\value{page}
     \if@firstcolumn\reversemarginpar\fi
   \else
     \if@firstcolumn\else\reversemarginpar\fi
   \fi
   \oldmarginnote{{\small\color{gray}#1}}
   \endgroup
}
\newcommand{\sqlfont}{\fontsize{7}{9}\selectfont\sf}
\newcommand{\sqlfontsm}{\fontsize{5.5}{4.5}\selectfont\sf}
\newcommand{\sqlfontlg}{\fontsize{8}{9}\selectfont\sf}
\newcommand{\ssf}[1]{{{\sf #1}}}
\newenvironment{sqlquery}{
\vspace*{.4mm}
\sqlfont
\begin{tabbing}
\hspace*{\parindent}\=\hspace*{1.8cm}\=\hspace*{1.3cm}\kill}
{\end{tabbing}}
\newenvironment{sqlquerylg}{
\vspace*{1mm}
\sqlfontlg
\begin{tabbing}
\hspace*{\parindent}\=\hspace*{1.8cm}\=\hspace*{1.3cm}\kill}
{\end{tabbing}}
\newcommand{\SELECT}{\ssf{SELECT}\xspace}
\newcommand{\PACKAGE}[1]{\ssf{PACKAGE}$(#1)$\xspace}
\newcommand{\FROM}{\ssf{FROM}\xspace}
\newcommand{\REPEAT}{\ssf{REPEAT}\xspace}
\newcommand{\WHERE}{\ssf{WHERE}\xspace}
\newcommand{\SUCHTHAT}{\ssf{SUCH}~\ssf{THAT}\xspace}
\newcommand{\MAXIMIZE}{\ssf{MAXIMIZE}\xspace}
\newcommand{\MINIMIZE}{\ssf{MINIMIZE}\xspace}
\newcommand{\AS}{\ssf{AS}\xspace}
\newcommand{\AND}{\ssf{AND}\xspace}
\newcommand{\BETWEEN}{\ssf{BETWEEN}\xspace}
\newcommand{\WITHPROBABILITY}[1]{\ssf{WITH}~\ssf{PROBABILITY}#1\xspace}
\newcommand{\PROBABILITYOF}[1]{\ssf{PROBABILITY}~\ssf{OF}\ifthenelse{\equal{#1}{}}{}{~#1}\xspace}
\newcommand{\EXPECTED}[1]{\ssf{EXPECTED}\ifthenelse{\equal{#1}{}}{}{~#1}\xspace}
\newcommand{\SUM}[1]{\text{\ssf{SUM}}(#1)}
\newcommand{\COUNT}[1]{\text{\ssf{COUNT}}(#1)}
\newcommand{\paql}{\textsc{PaQL}\xspace}
\newcommand{\spaql}{\textsc{sPaQL}\xspace}
\newcommand{\ilp}{ILP\xspace}
\newcommand{\dilp}{DILP\xspace}
\newcommand{\silp}{SILP\xspace}
\newcommand{\stpr}{SP\xspace}
\newcommand{\sql}{SQL\xspace}
\newcommand{\spq}{SPQ\xspace}
\newcommand{\cplex}{CPLEX\xspace}
\newcommand{\tpch}{\mbox{TPC-H}\xspace}
\newcommand{\sdss}{SDSS\xspace}
\newcommand{\vfeasible}{feasible\xspace}
\newcommand{\vinfeasible}{infeasible\xspace}
\DeclarePairedDelimiter{\ceil}{\lceil}{\rceil}
\newcommand{\reals}{{\rm I\!R}}
\newcommand{\dattr}[1]{{\bf #1}}
\newcommand{\attr}[1]{\textbf{\textup{\textsc{#1}}}}
\newcommand{\rel}[1]{{\sf #1}}
\newcommand{\p}{{\sf P}\xspace}
\newcommand{\query}{\mathcal{Q}}
\newcommand{\expe}[1]{\mathbb{E}\left(#1\right)}
\newcommand{\expes}[1]{\mathbb{E}(#1)}
\newcommand{\expebg}[1]{\mathbb{E}\bigl(#1\bigr)}
\newcommand{\prob}[1]{\Pr\left(#1\right)}
\newcommand{\probs}[1]{\Pr({#1})}
\newcommand{\probbg}[1]{\Pr\bigl({#1}\bigr)}
\newcommand{\probbgg}[1]{\Pr\biggl({#1}\biggr)}
\newcommand{\indi}[1]{\mathds{1}\left(#1\right)}
\newcommand{\indis}[1]{\mathds{1}(#1)}
\newcommand{\indibg}[1]{\mathds{1}\bigl(#1\bigr)}
\newcommand{\indiBg}[1]{\mathds{1}\Bigl(#1\Bigr)}
\newtheorem{proposition}{Proposition}
\newtheorem{definition}{Definition}
\newtheorem{example}{Example}
\newcommand{\naive}{\textsc{Na\"{\i}ve}\xspace}
\newcommand{\sss}{\textup{\textsc{Summary\-Search}}\xspace}
\newcommand{\CSAsolve}{\textsc{\textup{CSA-Solve}}\xspace}
\newcommand{\sr}{\textsc{Sketch\-Refine}\xspace}
\newcommand{\SAA}{\textup{SAA}\xspace}
\newcommand{\CSA}{\textup{CSA}\xspace}
\newcommand{\Validate}[1]{\Call{Validate}{#1}\xspace}
\algnewcommand{\LineComment}[1]{\State {\footnotesize\color{gray}\(\triangleright\) #1}}
\newcommand{\para}[1]{\smallskip\noindent{\bf #1}.\xspace}
\newcommand{\mass}{$^*$}
\newcommand{\nyu}{$^\circ$}
\newcommand{\authorsep}{\qquad}
\newcommand{\plot}[1]{\IfFileExists{#1}{\includegraphics[width=2.07cm]{#1}}{
\includegraphics[width=2.07cm]{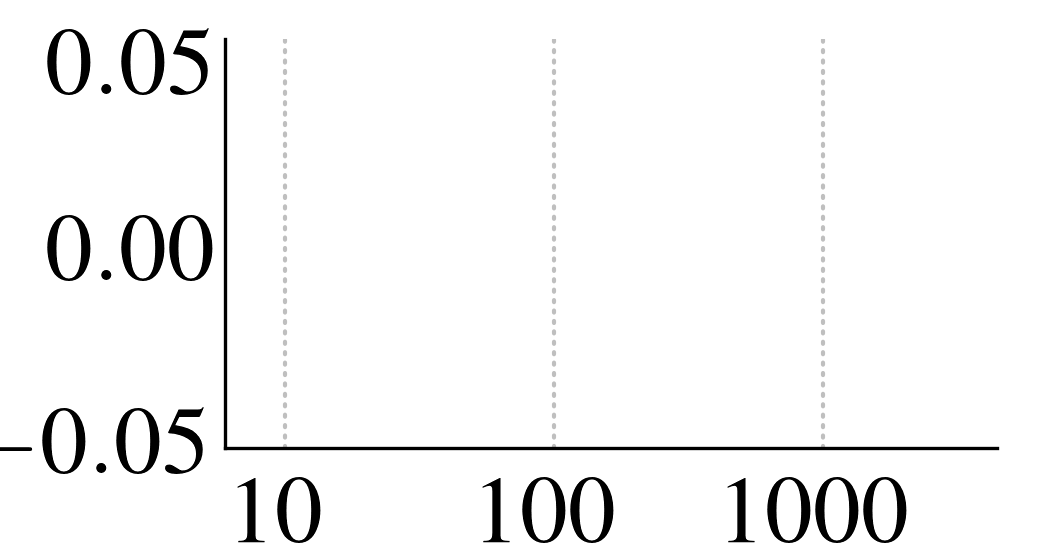}}}
\newif\ifshowappendix
\begin{document}

\fancyhead{}

\title{\mbox{Stochastic Package Queries in Probabilistic Databases$\textsuperscript{*}$}}
\subtitle{Extended Version}

\author{Matteo Brucato\mass{}\authorsep{} Nishant Yadav\mass{}}
\author{Azza Abouzied\nyu{}\authorsep{} Peter J.\ Haas\mass{}\authorsep{} Alexandra Meliou\mass}
\affiliation{
\begin{tabular}{c c c}
    \mass\institution{University of Massachusetts Amherst} & \phantom{XXX} & \nyu\institution{NYU Abu Dhabi}\\
    \institution{\{matteo, nishantyadav, phaas, ameli\}@cs.umass.edu} && \institution{azza@nyu.edu}
\end{tabular}
}

\renewcommand{\shortauthors}{Brucato et al.}
\renewcommand{\authors}{Matteo Brucato, Nishant Yadav, Azza Abouzied, Peter J. Haas, and Alexandra Meliou}

\begin{abstract}
We provide methods for in-database support of decision making under uncertainty.
Many important decision problems correspond
to selecting a \emph{package} (bag of tuples in a relational database)
that jointly satisfy a set of constraints
while minimizing some overall cost function;
in most real-world problems, the data is uncertain.
We provide methods for specifying---via a \sql extension---and
processing \emph{stochastic package queries (\spq{s})}, in order to
solve optimization problems over uncertain data, right where the data resides.
Prior work in stochastic programming uses Monte Carlo
methods where the original stochastic optimization problem is
approximated by a large deterministic optimization problem
that incorporates many \emph{scenarios}, i.e.,
sample realizations of the uncertain data values.
For large database tables, however, a huge number of scenarios is required,
leading to poor performance and, often, failure of the solver software.
We therefore provide a novel \sss algorithm that,
instead of trying to solve a large deterministic problem,
seamlessly approximates it via a sequence of smaller problems
defined over carefully crafted \emph{summaries} of the scenarios
that accelerate convergence to a feasible and near-optimal solution.
Experimental results on our prototype system
show that \sss can be orders of magnitude faster than
prior methods at finding feasible and high-quality packages.
\end{abstract}

\maketitle

\vspace{10pt}
\section{Introduction} \label{sec:introduction}

\begin{figure*}
    \begin{minipage}[t]{.31\textwidth}
        \centering
        \rel{Stock\_Investments} (Table)
        {\small{
        \begin{tabular}{|c|cccc|}
            \hline
            \dattr{id} & \dattr{stock} & \dattr{price} & \dattr{sell\_in} & $\attr{Gain}$ \\
            \hline
            1 & AAPL          & 234           & 1 day            & ? \\
            2 & AAPL          & 234           & 1 week           & ? \\
            3 & MSFT          & 140           & 1 day            & ? \\
            4 & MSFT          & 140           & 1 week           & ? \\
            5 & TSLA          & 258           & 1 day            & ? \\
            6 & TSLA          & 258           & 1 week           & ? \\
            \hline
        \end{tabular}
        }}
    \end{minipage}
    \hspace{4mm}
    \begin{minipage}[t]{.32\textwidth}
        Stochastic Package Query (\spaql)\\
        \begin{sqlquery}
        \hspace*{2mm}\=\hspace*{1.6cm}\=\hspace*{1.3cm}\= \kill
        \SELECT\; \PACKAGE{\ast}\; \AS\; \rel{Portfolio} \\
        \FROM\; \rel{Stock\_Investments} \\
        \SUCHTHAT        \\
        \>$\SUM{\dattr{price}} \le 1000$ \AND \\
        \>$\SUM{\attr{Gain}} \ge -10$ $\WITHPROBABILITY{\ge 0.95}$ \\
        \MAXIMIZE $\EXPECTED{\SUM{\attr{Gain}}}$
        \end{sqlquery}
    \end{minipage}
    \begin{minipage}[t]{.32\textwidth}
        \centering
        \rel{Portfolio} (Package)\\
        {\small{
        \begin{tabular}{|c|cccc|}
            \hline
            \dattr{id} & \dattr{stock} & \dattr{price} & \dattr{sell\_in} & \attr{Gain} \\
            \hline
            3 & MSFT          & 140           & 1 day            & ? \\
            3 & MSFT          & 140           & 1 day            & ? \\
            6 & TSLA          & 258           & 1 week           & ? \\
            \hline
            \multicolumn{5}{c}{} \\
            \multicolumn{5}{c}{\small\it ``Buy 2 MSFT shares, sell them tomorrow.} \\
            \multicolumn{5}{c}{\small\it   Buy 1 TSLA share, sell it in 1 week''.}
        \end{tabular}
        }}
    \end{minipage}
    \caption{Example input table for the {\sc Financial Portfolio} (left),
    its stochastic package query expression in \spaql (center),
    and an example output package (right) with a description of its meaning\
    for the investor.
    Stochastic attributes (\attr{Gain}, in this example) are denoted in small caps and their
    values are unknown (shown by a question mark). Sample realizations of the uncertain ? values are generated by calls to VG functions.
    }
    \label{fig:portfolio}
\end{figure*}

Constrained optimization is central to decision making over a broad range of domains,
including
finance~\cite{jorion2007financial,hong2014monte},
transportation~\cite{clare2012air},
healthcare~\cite{geng2019addressing},
the travel industry~\cite{DeChoudhury2010},
robotics~\cite{du2011probabilistic}, and
engineering~\cite{bienstock2014chance}.
Consider, for example, the following very common investment problem.

\begin{example}[Financial Portfolio]
    \label{ex:portfolio}
    Given uncertain predictions for future stock prices based on financial models derived from historical data,
    an investor wants to invest $\$1{,}000$ in a set of trades
    (decisions on which stocks to buy and when to sell them)
    that will maximize the \emph{expected future gain},
    while ensuring that the \emph{loss} (if any) will be lower than $\$10$
    with probability at least $95\%$.
\end{example}

Suppose each row in a table contains a possible stock trade an investor can make:
whether to buy one share of a certain stock, and when to sell it back,
as shown in the left-hand side of Figure~\ref{fig:portfolio}.
The investor wants a ``package'' of trades---a subset of the input table, with possible repetitions
(i.e., multiple shares)---that is \emph{feasible}, in that it satisfies the given constraints
(total price at most \$1{,}000 and loss lower than \$10 with probability at least $95\%$), and \emph{optimal},
in that it maximizes an objective (expected future gain).
Although the current price of a stock is known---i.e.,
\dattr{price} is a \emph{deterministic} attribute---its future price,
and thus the gain obtained after reselling the stock, is \emph{unknown}.
In the input table, \attr{Gain} is a \emph{stochastic attribute}.
If the future gains were known, \Cref{ex:portfolio} would be a
``package query''~\cite{Brucato2018,brucato2014packagebuilder},
directly solvable as an \emph{Integer Linear Program} (\ilp) using off-the-shelf linear solvers
such as IBM \cplex~\cite{cplex}, and declaratively expressible in the Package Query Language (\paql).
Because \attr{Gain} is stochastic, the investor is solving a \emph{stochastic \ilp} instead.
In this paper, we introduce \emph{stochastic package queries} (SPQs),
a generalization of package queries that allows uncertainty in the data,
thereby allowing specification and solution of stochastic ILP problems.

We first introduce a simple language extension to \paql, called \spaql,
that allows easy specification of package queries with stochastic constraints and objectives.
We show the \spaql query for \Cref{ex:portfolio} in \Cref{fig:portfolio}.
The result of the query, on the right-hand side of the figure,
is a package that informs the investor about how
many trades to buy for each individual stock, and when to plan reselling them to the stock market.

Probabilistic databases~\cite{dalvi2007efficient,suciu2011probabilistic}
enable the representation of random variables in a database. 
The {\sc Financial Portfolio},
like many other real-world applications,
typically uses complex distributions to model uncertainty.
For instance, future stock prices are sometimes forecast using lognormal variates
based on ``geometric Brownian motion''~\cite{ross2014introduction}
using historical stock price data; alternatively, forecasts can incorporate complex stochastic predictive simulation or machine learning models.
For this reason, we base SPQs on the Monte Carlo probabilistic data model~\cite{jampani2008mcdb,JampaniXWPJH11},
which offers support for arbitrary distributions via user-defined \emph{variable generation} (VG) functions.
To generate a sample realization of the random variables in a database, the system calls the appropriate VG functions.
Whereas existing probabilistic databases excel at supporting \sql-like queries under uncertainty,
they do not support package-level optimization, and therefore cannot answer SPQs.
PackageBuilder~\cite{Brucato2018,brucato2014packagebuilder}, on the other hand,
only supports deterministic package queries and their translation into deterministic \ilp{s}.

The state of the art in solving stochastic \ilp{s} (\silp{s}) has been developed outside of the database setting,
in the field of~\emph{stochastic programming} (\stpr)~\cite{ahmed2008solving,CAMPI2009149,HOMEMDEMELLO201456}.
\stpr techniques approximate the given \silp
by a large deterministic ILP (\dilp) that simultaneously incorporates multiple \emph{scenarios}.
In a Monte Carlo database, a scenario is obtained by generating
a realization of every random variable in the table,
via a call to each associated VG function;
this procedure may be repeated multiple times,
generating a set of scenarios that are mutually independent and identically distributed (i.i.d.).
\Cref{fig:scenarios} shows an example of three possible scenarios for the input investment table
for \Cref{ex:portfolio}.
Roughly speaking, expectations in the \silp are approximated by averages over the scenarios
and probabilities by relative frequencies to form the \dilp,
which is then fed to a standard solver (e.g., \cplex).
The obtained solution approximates the true optimal solution for the \silp;
the more scenarios, the better the approximation.

The solution of the \dilp, however, may not be feasible with respect to the original \silp,
especially if the approximation is based on only a small number of scenarios
that do not well represent the true uncertainty distribution.
For example, a financial package obtained by using too few scenarios
might guarantee a loss less than $\$10$ with a probability
of only $65\%$, rather then $95\%$, incurring more risk than desired.

There is no practical way to know how many scenarios will be needed \emph{a priori};
existing theoretical a-priori bounds---see, e.g.,~\cite{luedtke2008sample}---are usually too conservative
to be usable when table sizes are large.
For example, if the \rel{Stock\_Investments} table contains $N{=}50{,}000$ rows,
then to guarantee that the \dilp solution is feasible for the \silp with merely $0.1\%$ probability
(which is really no guarantee at all), one would need $690{,}000$ scenarios,
resulting in a \dilp with $34.5$ \emph{billion} coefficients!
\stpr solutions must therefore be ``validated'' \emph{a posteriori}, using a much larger,
and out-of-sample, set of scenarios.
In \Cref{ex:portfolio}, for instance, we would generate, say, $10^6$ scenarios
and verify that the loss is less than $\$10$ in at least $95\%$ of them;
such validation is much faster than solving a \dilp with $10^6$ scenarios.

\begin{figure}
    \begin{center}
    \begin{minipage}[t]{.32\columnwidth}
        \centering
        Scenario 1
        {{
        \begin{tabular}{|c|cc|}
            \hline
            \dattr{id} & $\dots$ & \dattr{gain} \\
            \hline
            1 & $\dots$           & 0.1    \\
            2 & $\dots$           & 0.05   \\
            3 & $\dots$           & -0.2   \\
            4 & $\dots$           & 0.2    \\
            5 & $\dots$           & 0.1    \\
            6 & $\dots$           & -0.7    \\
            \hline
        \end{tabular}
        }}
    \end{minipage}
    \begin{minipage}[t]{.34\columnwidth}
        \centering
        Scenario 2
        {{
        \begin{tabular}{|c|cc|}
            \hline
            \dattr{id} & $\dots$ & \dattr{gain} \\
            \hline
            1 & $\dots$           & -0.2    \\
            2 & $\dots$           & -0.03   \\
            3 & $\dots$           & 0.5   \\
            4 & $\dots$           & 0.7    \\
            5 & $\dots$           & -0.7    \\
            6 & $\dots$           & -0.001    \\
            \hline
        \end{tabular}
        }}
    \end{minipage}
    \begin{minipage}[t]{.32\columnwidth}
        \centering
        Scenario 3
        {{
        \begin{tabular}{|c|cc|}
            \hline
            \dattr{id} & $\dots$ & \dattr{gain} \\
            \hline
            1 & $\dots$           & 0.01    \\
            2 & $\dots$           & 0.02   \\
            3 & $\dots$           & -0.1   \\
            4 & $\dots$           & -0.3    \\
            5 & $\dots$           & 0.2    \\
            6 & $\dots$           & 0.3    \\
            \hline
        \end{tabular}
        }}
    \end{minipage}
    \end{center}
    \caption{
    Three example scenarios for the \rel{Stock\_Investments} table, each
    showing only the \dattr{id}s and specific realizations for
    the stochastic attribute \attr{Gain}.}
    \label{fig:scenarios}
\end{figure}

The state-of-the-art algorithm thus works in a loop: the \emph{optimization phase} creates scenarios,
combines them into a \dilp, and computes a solution;
the \emph{validation phase} validates the solution against the out-of-sample scenarios.
If the solution is feasible on the validation scenarios (\emph{validation-feasible}),
the algorithm terminates, otherwise it creates more scenarios and repeats.
A solution that is validation-feasible is highly likely to be truly feasible for the original \silp.
Typically the ultimate number of scenarios used to compute the optimal solution to the \dilp
is astronomically smaller than the number prescribed by the conservative theoretical bounds
(though it is still large enough to be extremely computationally challenging).

Unfortunately, this process often breaks down in practice. Uncertainty increases with increasing table size,
and large tables typically need a huge number of scenarios to achieve feasibility.
Thus the validation phase repeatedly fails, and the scenario set---and hence
the \dilp---grows larger and larger until the solver is overwhelmed.
Even if the solver can ultimately handle the problem,
many ever-slower iterations may be required until validation-feasible solutions are found,
resulting in poor performance.

In this paper, we present an end-to-end system for SPQs,
seamlessly connecting \silp optimization with data management and stochastic predictive modeling.
Thus tasks related to efficiently storing data, maintaining consistency,
controlling access, and efficiently retrieving and preparing the data for analysis
can leverage the full power of a DBMS, while  avoiding the usual slow, cumbersome,
and error-prone analytics workflow where we read a dataset off of a database into main memory,
feed it to stochastic-prediction and optimization packages, and store the results back into the database.

We first introduce a \naive query evaluation algorithm,
which embodies the state-of-the-art optimization/validation technique outlined above,
and thoroughly discuss its drawbacks.
(Although the \naive technique is mentioned in the \stpr literature,
to our knowledge this is the first systematic implementation of the approach.)
We then introduce our new algorithm, \sss, that
is typically faster than \naive by orders of magnitude and can handle problems that cause \naive to fail.

Our key observation is that the randomly selected set of scenarios used to form the \dilp
during an iteration of \naive tend to be overly ``optimistic'',
leading the solver towards a seemingly good solution that ``in reality''---i.e.,
when tested against the validation scenarios---turns out to be infeasible.
This problem is also known as the ``optimizer's curse''~\cite{smith2006optimizer}.

To overcome the optimizer's curse, \sss replaces the large set of scenarios
used to form the \naive \dilp by a very small synopsis of the scenario set,
called a ``summary'', which results in a ``reduced'' \dilp that is much smaller than the \naive \dilp.
A summary is carefully crafted to be ``conservative''
in that the constraints in the reduced \dilp are harder to satisfy than the constraints in the \naive \dilp.
Because the reduced \dilp is much smaller than the \naive \dilp,
it can be solved much faster; moreover, the resulting solution is much more likely to be validation-feasible,
so that the required number of optimization/validation iterations is typically reduced.
Of course, if a summary is overly conservative, the resulting solution will be feasible,
but highly suboptimal.
Therefore, during each optimization phase, \sss implements a sophisticated search procedure
aimed at finding a ``minimally'' conservative summary;
this search requires solution of a sequence of reduced \dilp{s}, but each can be solved quickly.

Our experiments (Section~\ref{sec:experiments}) show that,
since its iterations are much faster than those of \naive,
\sss exhibits a large net performance gain even when the number of iterations is comparable;
typically, the number of iterations is actually much lower for \sss than for \naive,
further augmenting the performance gain.

\smallskip
In summary, the contributions of our paper are as follows.
\begin{itemize}[itemsep=5pt,wide,labelwidth=!,labelindent=0pt,leftmargin=\parindent]
    \item We extend the \paql language for deterministic package queries (itself an extension of \sql);
    the resulting language, \spaql (\Cref{sec:spaql}),
    allows specification of package queries with stochastic constraints and objectives.
    \item We provide a precise and concrete embodiment, the \naive algorithm,
    of the optimization/validation procedure suggested by the \stpr literature (Section~\ref{sec:monte-carlo}).
    \item We provide a novel algorithm, \sss, that is orders-of-magnitude faster than \naive,
    and that can solve \spq{s} that  require too many scenarios for \naive to handle.
    This is a significant contribution and fundamental extension to the known state-of-the-art
    in stochastic programming (\Cref{sec:summary-mc}).
	\item We present techniques that allow \sss to optimize its parameters automatically,
    and we provide theoretical approximation guarantees on the solution
    of \sss relative to \naive (Section~\ref{sec:optSum}).
    \item We provide a comprehensive experimental study,
    which indicates that \sss always finds validation-feasible solutions of high quality,
    even when \naive cannot, with dramatic speed-ups relative to \naive (Section~\ref{sec:experiments}).
\end{itemize}

\smallskip
Section~\ref{sec:related-work} discusses related work, and we conclude in Section~\ref{sec:conclusion-and-future}.
Our \spq techniques represent a significant step towards data-intensive decision making under uncertainty.

\section{Preliminaries} \label{sec:background}

Our work lies at the intersection of package queries, probabilistic databases, and stochastic programming.
In this section, we introduce some basic definitions from these areas that we will use throughout the paper.

\vspace{2pt}
\subsection{Deterministic Package Queries}\label{subsec:deterministic-package-queries}
A \emph{package} $\p$ of a relation $\rel{R}$ is a relation
obtained from $\rel{R}$ by inserting $m_{\p}(t)\ge 0$ copies of $t$ into $\p$
for each $t\in\rel{R}$;
here $m_{\p}$ is the \emph{multiplicity function} of $\p$.
The goal of a \emph{package query} is to specify $m_{\p}$,
and hence the tuples of the corresponding package relation.
A package query may include a \WHERE clause (tuple-level constraints),
a \SUCHTHAT clause (package-level constraints),
a package-level objective predicate and,
possibly, a \REPEAT limit, i.e., an upper bound on the number of duplicates of each tuple in the package.

A deterministic package query can be translated into an equivalent \emph{integer program}~\cite{Brucato2018}.
For each tuple $t_i \in \rel{R}$, the translation assigns a nonnegative integer decision variable $x_i$
corresponding to the multiplicity of $t_i$ in $\p$, i.e., $x_i=m_{\p}(t_i)$.
If the objective function and all constraints are linear in the $x_i$'s, the resulting integer program is an \ilp.
A cardinality constraint $\COUNT{\ast} = 3$ is translated into the \ilp constraint $\sum_{i=1}^{N} x_i = 3$.
A summation constraint $\SUM{\dattr{price}} \le 1000$ is translated into
$\sum_{i=1}^{N} t_i.\dattr{price} \,x_i \le 1000$;
this translation works similarly for other linear constraints and objectives.
A \REPEAT $l$ constraint is translated into bound constraints $x_i \le l + 1, \forall i\in[1..N]$.

\vspace{2pt}
\subsection{Monte Carlo Relations}
We use the Monte Carlo database model to represent uncertainty in a probabilistic database.
Uncertain values are modeled as random variables,
and a scenario (a deterministic realization of the relation) is generated by invoking
all of the associated VG functions for the relation.
In the simplest case, where all random variables are statistically independent,
each random variable has its own VG function;
in general, multiple random variables can share the same VG function,
allowing specification of various kinds of statistical correlations.
A Monte Carlo database system such as MCDB~\cite{jampani2008mcdb}
(or its successor, SimSQL~\cite{cai2013simulation}) facilitates specification of VG
functions as user-defined functions.
We assume that there exists a deterministic key column that is the same in each scenario,
so that each scenario contains exactly $N$ tuples for some $N\ge 1$
and  the notion of the ``$i$th tuple $t_i$" is well defined across scenarios.
For simplicity, we focus henceforth on the case where a database comprises a single relation.
Our results extend to Monte Carlo databases containing multiple (stochastic) base relations in which
the \spq is defined in terms of a relation obtained via a query over the base relations.

\subsection{Stochastic \ilp{s}}
The field of stochastic programming (\stpr)~\cite{shapiro2009lectures,kall1994stochastic}
studies optimization problems---selecting values of decision variables, subject to constraints,
to optimize an objective value---having uncertainty in the data.
We focus on \silp{s} with linear constraints and linear objectives that are deterministic,
expressed as expectations, or expressed as probabilities.
Probabilistic constraints are also called ``chance'' constraints in the \stpr literature.

\smallskip\noindent\textit{Linear constraints}.
Given random variables $\xi_1,\ldots,\xi_N$, decision variables $x_1,\ldots x_N$,
a real number $v \in \reals$, and a relation $\odot \in \{\le, \ge\}$,
a linear \emph{expectation constraint} takes the form $\expebg{\sum_{i=1}^{N} \xi_i x_i} \odot v$,
and a linear \emph{probabilistic constraint} takes the form $\probbg{\sum_{i=1}^{N} \xi_i x_i \odot v} \ge p$,
where $p\in [0,1]$.
We refer to $\sum_{i=1}^{N} \xi_i x_i \odot v$
as the \emph{inner constraint} of the probabilistic constraint,
and to $\sum_{i=1}^{N} \xi_i x_i$ as its \emph{inner function}.
Constraints of the form $\prob{\cdot} \le p$ can be rewritten in the aforementioned form
by flipping the inequality sign of the inner constraint and using $1-p$ instead.
If for constants $c_1,\ldots,c_N\in\reals$ we have $\probs{\xi_i=c_i}=1$ for $i\in[1..N]$,
then we obtain the deterministic constraint $\sum_{i=1}^{N} c_i x_i \odot v$
as a special case of an expectation constraint.

\smallskip\noindent\textit{Objective}.
Without loss of generality, we assume throughout that the objective has the canonical form
$\min_x\sum_{i=1}^{N} c_i x_i$ for deterministic constants $c_1,\ldots,c_N$.
Indeed, observe that an objective in the form of an expectation of a linear function
can be written in canonical form: $\min_x\expebg{\sum_{i=1}^{N} \xi_i x_i} = \min_x\sum_{i=1}^{N} \expe{\xi_i} x_i$,
and thus we take $c_i = \expe{\xi_i}$.
(This assumes that each expectation $\expe{\xi_i}$ is known or can be accurately approximated.)
We call $\sum_{i=1}^{N} \xi_i x_i$ the \emph{inner function} of the expectation.
Similarly, an objective in the form of a probability can be written in canonical form
using epigraphic rewriting~\cite{campi2018wait}.
For example, we can rewrite an objective of the form
$\min_x \probbg{\sum_{i=1}^{N} \xi_i x_i \odot v}$ in canonical form as $\min_{x,y} y$
and add a new probabilistic constraint $\probbg{\sum_{i=1}^{N} \xi_i x_i \odot v} \le y$.
Here $c_1=\cdots c_N=0$ and $y$ is an artificial decision variable
added to the problem with objective coefficient $c_y=1$.
Throughout the rest of the paper, we will primarily focus on techniques for minimization problems
with a nonnegative objective function;
the various other possibilities can be handled with suitable modifications
and are presented in~\Cref{sec:optimality}.

In our database setting, we assume for ease of exposition that, in a given constraint or objective,
each random variable $\xi_i$ corresponds to a random attribute value $t_i.\attr{A}$
for some real-valued attribute $\attr{A}$;
a different attribute can be used for each constraint,
and need not be the same as the attribute that appears in the objective.
Our methods can actually support more general formulations:
e.g., an expectation objective of the form $\min_x \expebg{\sum_{i=1}^N g(t_i)x_i}$,
where $g$ is an arbitrary real-valued function of tuple attributes;
constraints can similarly be generalized.
Note that this general form allows categorical attributes to be used in addition to real-valued attributes.

\section{Na\"{\i}ve \silp Approximation} \label{sec:monte-carlo}

\begin{algorithm}[t]
    \caption{Na\"{\i}ve Monte Carlo Query Evaluation}
    \label{algo:mc-evaluation}
    \begin{flushleft}{\small
        \begin{itemize}[noitemsep,topsep=0pt,leftmargin=*,label={}]
            \item $\query:$ A stochastic package query
            \item $\hat{M}:$ Number of out-of-sample validation scenarios (e.g., $10^{6}$)
            \item $M:$ Initial number of optimization scenarios (e.g., $100$)
            \item $m:$ Iterative increment to $M$ (e.g., $100$)
        \end{itemize}
        \textbf{output}: A \vfeasible package solution $x$, or failure (no solution).
        \begin{algorithmic}[1]
            \State $\mathcal{S} \gets \Call{GenerateScenarios}{\query,M}$ \Comment{\textit{Optimization scenarios}}
            \label{line:Ngeneration}
            \Repeat
                \State $\SAA_{\query,M} \gets \Call{FormulateSAA}{\query, \mathcal{S}}$
                \Comment{\textit{Approximate DILP}}
                \label{line:formulation}
                \State $x \gets \Call{Solve}{\SAA_{\query,M}}$
                \Comment{\textit{Solve \textnormal{SAA} with $M$ scenarios}}
                \label{line:Nsolve}
                \State $\hat{v}_x \gets \Validate{x, \query, \hat{M}}$
                \Comment{\textit{Validate $x$ using $\hat{M}$ scenarios}}
                \label{line:Nvalidate}
                \If{$\hat{v}_x$.is\_feasible}
                    \Comment{\textit{$x$ is \vfeasible}}
                    \State \Return $x$
                \EndIf

                \LineComment{\textit{Otherwise, use more optimization scenarios}}
                \State $\mathcal{S} \gets \mathcal{S} \cup \Call{GenerateScenarios}{\query,m}$
                \State $M \gets M + m$
                \label{line:Nincrement}
            \Until
        \end{algorithmic}
    }
    \end{flushleft}
\end{algorithm}

Recall that \naive is the first systematic implementation of the optimization/validation approach
mentioned in the \stpr literature.
The pseudocode is given as \Cref{algo:mc-evaluation}.
As discussed previously, the algorithm generates scenarios (line~\ref{line:Ngeneration}),
combines them into an approximating \dilp (line~\ref{line:formulation}),
solves the \dilp to obtain a solution $x$ (line~\ref{line:Nsolve}),
and then validates the feasibility of $x$
against a large number of out-of-sample validation scenarios (line~\ref{line:Nvalidate}).
The process is iterated, adding additional scenarios at each iteration (line~\ref{line:Nincrement})
until the validation phase succeeds.
We now describe these steps in more detail.

As discussed in the Introduction, the optimization phase for the \dilp can be very slow,
and often the convergence to feasibility requires so many optimize/validate iterations
that the \dilp becomes too large for the solver to handle, so that \naive fails.
Our novel \sss algorithm in \Cref{sec:summary-mc} uses ``summaries'' to speed up the optimization phase
and reduce the number of required iterations.

\vspace*{5pt}
\subsection{Sample-average approximation} \label{subsec:SAA}

As mentioned previously, we can generate a scenario by invoking all of the VG functions
for a table to obtain a realization of each random variable,
and can repeat this process $M$ times to obtain a Monte Carlo sample of $M$ i.i.d. scenarios.
In our implementation, \naive generates scenarios by seeding the random number generator
once for the entire execution, and accumulates scenarios in main memory.

We then obtain the \dilp from the original \silp by replacing the distributions of the random variables
with the empirical distributions corresponding to the sample.
That is, the probability of an event is approximated by its relative frequency in the sample,
and the expectation of a random variable by its sample average.
In the \stpr literature, this approach is known as \emph{Sample Average Approximation}
(\SAA)~\cite{ahmed2008solving,luedtke2008sample},
and we therefore refer to the \dilp for the stochastic package query $\query$ as $\SAA_{\query, M}$.

More formally, suppose that we have $M$ scenarios $S_1,\ldots,S_M$, each with $N$ tuples.
Recall that $t_i.\attr{A}$ denotes the random variable corresponding to attribute $\attr{A}$ in tuple $t_i$,
and denote by $s_{ij}.\attr{A} \in \reals$ the realized value of $t_i.\attr{A}$ in scenario~$S_j$.
Then each expected sum $\expebg{\sum_{i=1}^N t_i.\attr{A}\; x_i} = \sum_{i=1}^N \expe{t_i.\attr{A}} x_i$
is approximated by $\sum_{i=1}^{N} t_{i}.{\bar{\mu}_\attr{A}}\; x_i$,
where $t_{i}.\bar{\mu}_\attr{A}=(1/M)\sum_{j=1}^{M} s_{ij}.\attr{A}$.

To approximate a probabilistic constraint of the form
\begin{equation}\label{eq:probconst}
\probbgg{\sum_{i=1}^N t_i.\attr{A}\; x_i \odot v} \ge p,
\end{equation}
we add to the problem a new \emph{indicator variable},
${y_{j} \in \{0,1\}}$ for each scenario $j \in [1..M]$,
along with an associated \emph{indicator constraint}:
$y_{j} = \indi{\textstyle\sum_{i=1}^{N} s_{ij}.\attr{A}\; x_i \odot v}$,
where the indicator function $\indi{\cdot}$ equals 1 if the inner constraint is satisfied and equals 0 otherwise.
We say that solution $x$ ``satisfies scenario~$S_j$'' (with respect to the constraint) if and only if $y_{j} = 1$.
(Solvers like \cplex can handle indicator constraints.)
Finally, we add the following linear constraint over the indicator variables:
$\textstyle\sum_{j=1}^{M} y_{j} \ge \lceil p M \rceil$,
where $\lceil u\rceil$ is the smallest integer greater than or equal to $u$.
That is, we require that the solution $x$ satisfies at least a fraction $p$ of the $M$ scenarios.
The $\textsc{FormulateSAA}()$ function applies these approximations to create the \dilp $\SAA_{\query, M}$.

\para{Size complexity}
With $K$ constraints, the size of $\SAA_{\query, M}$,
measured with respect to the number of coefficients, is $\Theta(NMK)$:
we have $N$ coefficients for each expectation constraint and, for each probabilistic constraint,
$N+1$ coefficients (for $x_1,\ldots,x_N,y_j$) for each scenario.

\subsection{Out-of-sample validation} \label{subsec:validation}

After using $M$ scenarios to create and solve the \dilp $\SAA_{\query, M}$,
we check to see if the solution $x$ is \emph{validation-feasible}
in that it is a feasible solution for the \dilp $\SAA_{\query, \hat{M}}$
that is constructed using $\hat{M}\gg M$ out-of-sample scenarios.
When $\hat{M}$ is sufficiently large, validation feasibility is a proxy for true feasibility, i.e.,
feasibility for the original \silp;
commonly, $\hat{M} = 10^6$ or $10^7$.
This definition of validation-feasibility is simple, but widely accepted~\cite{luedtke2008sample}.
Although there are other, more sophisticated ways to use validation scenarios
to obtain confidence intervals on degree of constraint violation---see, e.g.,~\cite{campi2018wait}---these are
orthogonal to the scope of this paper.
Henceforth, we use the term ``feasibility'' to refer to ``validation feasibility'', unless otherwise noted.

In our implementation, during a precomputation phase,
we actually average $\hat M\gg M$ scenarios---the same number as the number of validation scenarios---to
estimate each $\expe{t_i.\attr{A}}$; 
we then append these  estimates, denoted $t_{i}.{\hat{\mu}_\attr{A}}$, to the table.
We do this because such averaging is typically very fast to execute,
and is space-efficient in that we simply maintain running averages.
Thus a solution $x$ returned by a solver is always \vfeasible for every expectation constraint,
and hence is \vfeasible overall if and only if,
for every probabilistic constraint of the form~\eqref{eq:probconst},
$x$ satisfies at least a fraction $p$ of the validation scenarios.
We can therefore focus attention on the probabilistic constraints, which are the most challenging.

The procedure $\Validate{x, \query, \hat{M}}$ checks the feasibility of $x$,
the solution to $\SAA_{\query, M}$;
we describe its operation on a single probabilistic constraint
$\probbg{\sum_{i=1}^N t_i.\attr{A}\;x_i\odot v}\ge p$,
but the same steps are taken independently for each probabilistic constraint.
It first seeds the system random number generator with a different seed than the one used
to generate the optimization scenarios.
For each $j\in[1..\hat{M}]$, it generates a realization $\hat{s}_{ij}.\attr{A}$
for each $t_i.\attr{A}$ such that $x_i > 0$
(i.e., for each tuple that appears in the solution package),
and computes the ``score'' $\sigma_j=\sum_{i:x_i>0} \hat{s}_{ij}.\attr{A}\;x_i$.
It then sets $y_j=\indis{\sigma_j\odot v}$.
After all scenarios have been processed, it computes $Y=\sum_{j=1}^{\hat M}y_j$
and declares $x$ to be feasible if $Y\ge \lceil p\hat M\rceil$.
The algorithm purges all realizations from main memory after each scenario has been processed,
and only stores the running count of the $y_j$'s,
allowing it to scale to an arbitrary number of validation scenarios.
Moreover, a package typically contains a realtively small number of tuples,
so only a small number of realizations need be generated.

\section{Summary-Based Approximation} \label{sec:summary-mc}

The \naive algorithm has three major drawbacks. (1)~The overall time to derive
a feasible solution to $\SAA_{\query, M}$ can be unacceptably long, since the
size of $\SAA_{\query, M}$ sharply increases as $M$ increases. (2)~It often
fails to obtain a feasible solution altogether---in our experiments, the
solver (\cplex) started failing with just a few hundred optimization scenarios.
(3)~\naive does not offer any guarantees on how close the objective value
$\omega$ of the solution $x$ to $\SAA_{\query, M}$ is to the true objective
value $\hat\omega$ of the solution $\hat x$ to the \dilp $\SAA_{\query, \hat
M}$ that is based on the validation scenarios. (Recall that we use
$\SAA_{\query, \hat M}$ as a proxy for the actual \silp.) A feasible
solution $x$ that \naive provides can be far from optimal.

Our improved algorithm, \sss, which we present in this section,
addresses these challenges by ensuring the efficient generation of \vfeasible results
through much smaller ``reduced'' \dilp{s} that each replace a large collection of $M$ scenarios
with a very small number $Z$ of scenario ``summaries'';
in many cases it suffices to take $Z=1$.
We call such a reduced \dilp a \emph{Conservative Summary Approximation} (\CSA),
in contrast to the much larger sample average approximation (\SAA) used by \naive.
The summaries are carefully designed to be more ``conservative''
than the original scenario sets that they replace:
the constraints are harder to satisfy, and thus the solver is induced to produce \vfeasible solutions faster.
\sss also guarantees that, for any user-specified approximation error $\epsilon \ge \epsilon_{\text{min}}$ (where $\epsilon_{\text{min}}$ is defined in \Cref{subsec:early-termination}),
if the algorithm returns a solution $x$,
then the corresponding objective value $\omega$ satisfies $\omega\le(1+\epsilon)\hat\omega$;
in this case we say that  $x$ is a \emph{$(1+\epsilon)$-approximate} solution.
(Recall that we focus on minimization problems with nonnegative objective functions;
the other cases are discussed in~\Cref{sec:optimality}.)

\vspace*{4pt}
\subsection{Conservative Summary Approximation} \label{subsec:ssa}

We first define the concept of an $\alpha$-summary, and then describe how $\alpha$-summaries are used to construct a \CSA.

\para{Summaries} Recall that a solution $x$ to $\SAA_{\query, M}$ \emph{satisfies} a scenario $S_j$ with respect to a probabilistic constraint of the form of Equation~\eqref{eq:probconst} if $y_j=\indibg{\sum_{i=1}^N s_{ij}.\attr{A}\;x_i\odot v}=1$, where
$s_{ij}.\attr{A}$ is the realized value of $t_i.\attr{A}$ in $S_j$.

\vspace*{4pt}
\begin{definition}[$\alpha$-Summary]
Let $\alpha \in [0,1]$.
An \emph{$\alpha$-summary} $S=\{s_i.\attr{A}:1\le i\le N\}$ of a scenario set $\mathcal{S}=\{S_1,\ldots,S_M\}$
with respect to a probabilistic constraint $C$ of the form~\eqref{eq:probconst} is a collection of $N$ deterministic
values of attribute \attr{A} such that if a solution $x$
\emph{satisfies} $S$
in that $\sum_{i=1}^N s_i.\attr{A}\;x_i\odot v$,
then $x$ satisfies at least $\lceil\alpha M\rceil$ of the scenarios
in $\mathcal{S}$ with respect to $C$.
\end{definition}
\vspace*{4pt}

Constructing an $\alpha$-summary, for $\alpha > 0$, is simple:
Suppose that the inner constraint of probabilistic constraint $C$ has the form $\sum_{i=1}^{N} t_i.\attr{A}\;x_i \ge v$.
Given any subset of scenarios $G(\alpha)\subseteq\mathcal{S}$ of size exactly $\lceil\alpha M\rceil$,
we define $S$ as the tuple-wise minimum over $G(\alpha)$:
\[
s_i.\attr{A} \coloneqq \min_{S_j \in G(\alpha)} s_{ij}.\attr{A}
\]

\begin{figure}
    \centering
    {\small
    \begin{tabular}{|c|cc|c|c|cc|c|c|cc|}
        \multicolumn{3}{c}{Scenario 1} &\multicolumn{1}{c}{}
        & \multicolumn{3}{c}{Scenario 3} & \multicolumn{1}{c}{}
        & \multicolumn{3}{c}{0.66-Summary}\\
        \cline{1-3}\cline{5-7}\cline{9-11}
           \dattr{id} & $\dots$ & \dattr{gain}
        && \dattr{id} & $\dots$ & \dattr{gain}
        && \dattr{id} & $\dots$ & \dattr{gain}\\
        \cline{1-3}\cline{5-7}\cline{9-11}
           1 & $\dots$           & 0.1
        && 1 & $\dots$           & 0.01
        && 1 & $\dots$           & 0.01\\
           2 & $\dots$           & 0.05
        && 2 & $\dots$           & 0.02
        && 2 & $\dots$           & 0.02   \\
           3 & $\dots$           & -0.2
        && 3 & $\dots$           & -0.1
        && 3 & $\dots$           & -0.2   \\
           4 & $\dots$           & 0.2
        && 4 & $\dots$           & -0.3
        && 4 & $\dots$           & -0.3   \\
           5 & $\dots$           & 0.1
        && 5 & $\dots$           & 0.2
        && 5 & $\dots$           & 0.1   \\
           6 & $\dots$           & -0.7
        && 6 & $\dots$           & 0.3
        && 6 & $\dots$           & -0.7   \\
        \cline{1-3}\cline{5-7}\cline{9-11}
    \end{tabular}
    }
    \vspace{-2mm}
    \caption{
    Using two out of the three scenarios of \Cref{fig:scenarios}, we derive a 0.66-summary.}
    \label{fig:summaryFinancial}
\end{figure}

\vspace*{4pt}
\begin{proposition}
    $S$ is an $\alpha$-summary of $\,\mathcal{S}$ with respect to $C$.
\end{proposition}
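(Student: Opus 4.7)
The plan is to verify the definitional implication directly by monotonicity of the linear inner function in the scenario realizations, relying crucially on the nonnegativity of the decision variables $x_i$ (which hold because they represent tuple multiplicities).

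First, I would fix the case considered in the construction, namely an inner constraint of the form $\sum_{i=1}^{N} t_i.\attr{A}\, x_i \ge v$, and take an arbitrary solution $x$ that satisfies the summary $S$, i.e., $\sum_{i=1}^{N} s_i.\attr{A}\, x_i \ge v$. Pick any scenario $S_j \in G(\alpha)$. By the definition $s_i.\attr{A} = \min_{S_k \in G(\alpha)} s_{ik}.\attr{A}$, we have $s_{ij}.\attr{A} \ge s_i.\attr{A}$ for every $i\in[1..N]$. Since $x_i \ge 0$, termwise multiplication preserves the inequality and summation yields
\[
\sum_{i=1}^{N} s_{ij}.\attr{A}\, x_i \;\ge\; \sum_{i=1}^{N} s_i.\attr{A}\, x_i \;\ge\; v,
\]
so $x$ satisfies scenario $S_j$ with respect to $C$. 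Since $j$ was arbitrary in $G(\alpha)$ and $|G(\alpha)| = \lceil \alpha M\rceil$, the solution $x$ satisfies at least $\lceil \alpha M\rceil$ scenarios in $\mathcal{S}$, which is exactly the defining property of an $\alpha$-summary.

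Next, I would briefly note the complementary case. If the inner constraint has the form $\sum_{i=1}^{N} t_i.\attr{A}\, x_i \le v$, then the analogous construction uses the tuple-wise \emph{maximum} over $G(\alpha)$, and the same monotonicity argument (again using $x_i \ge 0$) gives $\sum_{i=1}^N s_{ij}.\attr{A}\, x_i \le \sum_{i=1}^N s_i.\attr{A}\, x_i \le v$, so $x$ satisfies every scenario in $G(\alpha)$. In either case, the boundary case $\alpha = 0$ is vacuous since $\lceil \alpha M\rceil = 0$ and every $x$ trivially satisfies at least $0$ scenarios.

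There is no real obstacle here: the argument is a one-line monotonicity calculation. The only subtle point worth flagging explicitly in the write-up is the use of $x_i \ge 0$, which is essential for the inequality to propagate through the weighted sum; without this assumption (which is guaranteed by the package-query semantics, where $x_i$ is a multiplicity), the tuple-wise minimum construction would not yield a conservative bound.
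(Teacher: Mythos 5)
Your proof is correct and follows essentially the same route as the paper's: termwise comparison $s_{ij}.\attr{A} \ge s_i.\attr{A}$ over $G(\alpha)$, propagated through the weighted sum to conclude that every scenario in $G(\alpha)$ is satisfied. Your explicit appeal to $x_i \ge 0$ is a worthwhile clarification that the paper leaves implicit, but it does not change the argument.
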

\begin{proof}
    Suppose $x$ satisfies $S$, i.e., $\sum_{i=1}^{N} s_i.\attr{A}\;x_i \ge v$.
    Then, for every scenario $S_j \in G(\alpha)$,
    ${\sum_{i=1}^{N} s_{ij}.\attr{A} x_i \ge \sum_{i=1}^{N} s_i.\attr{A} x_i \ge v}$.
    Since $|G(\alpha)| = \ceil{\alpha M}$, the result follows.
\end{proof}

\Cref{fig:summaryFinancial} illustrates an $\alpha$-summary for the three scenarios in \Cref{fig:scenarios}, where $\alpha=0.66$ and $G(\alpha)$ comprises scenarios~1 and 3. The summary is conservative in that, for any choice $x$ of trades, the gain under the summary values will be less than the gain under either of the two scenarios. Thus if we can find a solution that satisfies the summary, it will automatically satisfy at least scenarios~1 and 3. It might also satisfy scenario~2, and possibly many more scenarios, including unseen scenarios in the validation set. Indeed, if we are lucky, and in fact our solution satisfies at least $100p\%$ of the scenarios in the validation set, then $x$ will be feasible with respect to the constraint on \attr{Gain}.

Clearly, for an inner constraint with $\le$, the tuple-wise \emph{maximum} of $G(\alpha)$ yields an $\alpha$-summary. While there may be other ways to construct $\alpha$-summaries, in this paper we only consider minimum and maximum summaries, and defer the study of other, more sophisticated summarization methods to future work. Importantly, a summary need not coincide with any of the scenarios in $\mathcal{S}$; we are exploiting the fact that optimization and validation are decoupled.

\vspace{6pt}
\para{\CSA formulation}
A \CSA is basically an \SAA in which all probabilistic constraints are approximated using summaries
instead of scenarios.\footnote{As with the \SAA formulation, expectations are approximated as averages
over a huge number $\hat M$ of independent scenarios.}
The foregoing development implicitly assumed a single summary (with respect to a given probabilistic constraint $C$) for all of the $M$ scenarios in $\mathcal{S}$. In general, we use $Z$ summaries, where $Z\in[1..M]$. These are obtained by dividing $\mathcal{S}$ randomly into $Z$ disjoint partitions $\Pi_1,\ldots,\Pi_Z$, of approximately $M/Z$ scenarios each. Then the $\alpha$-summary $S_z=\{s_{iz}.\attr{A}:1\le i\le N\}$ for partition $\Pi_z$ is obtained by taking a tuple-wise minimum or maximum over scenarios in a subset $G_z(\alpha)\subseteq\Pi_z$, where $|G_z(\alpha)|=\ceil{\alpha|\Pi_z|}$.

For each probabilistic constraint $C$ of form~\eqref{eq:probconst},
we add to the \dilp a new indicator variable, $y_z \in \{0,1\}$, and an associated indicator constraint
$y_{z} := \indibg{\textstyle\sum_{i=1}^{N} s_{iz}.\attr{A}\; x_i \odot v}$. We say that solution $x$ ``satisfies summary $S_z$'' iff ${y_{z} = 1}$. We also add the linear constraint $\sum_{z=1}^{Z} y_{z} \ge \lceil p Z \rceil$, requiring at least $100p\%$ of the summaries to be satisfied.
We denote the resulting reduced \dilp by $\CSA_{\query, M, Z}$.

\para{Size complexity}
Assuming $K$ probabilistic constraints, the number of coefficients in $\CSA_{\query, M, Z}$ is $\Theta(NZK)$,
which is independent of $M$.
Usually, $Z$ takes on only small values, so that the effective size complexity is only $\Theta(NK)$.

Our results (\Cref{sec:experiments}) show that in most cases \sss finds good solutions with only one summary,
i.e., $Z=1$. Because $Z$ is small, the solution to $\CSA_{\query, M, Z}$ can be rapidly computed by a solver.
The \CSA formulation is also more robust to random fluctuations in the sampled data values,
and less prone to ``overfit'' to an unrepresentative set of scenarios obtained by luck of the draw.

An important observation is that as $Z$ increases, $\CSA_{\query, M, Z}$ approaches the $\SAA_{\query,M}$ formulation:
at $Z=M$ each partition will  contain exactly one scenario, which will also coincide with the summary for the partition.
Since $\CSA_{\query, M, Z}$ encompasses $\SAA_{\query,M}$,
we can always do at least as well as \naive with respect to the feasibility and optimality properties of our solution,
given $M$ scenarios.
We address the issue of how to choose $Z$, $\alpha$, and each $G_z(\alpha)$ below and in Section~\ref{sec:optSum},
and also discuss how to generate summaries efficiently.

\begin{algorithm}[t]
\caption{\sss Query Evaluation}
\label{algo:ss-evaluation}
\begin{flushleft}{\small
\begin{itemize}[noitemsep,topsep=0pt,leftmargin=*,label={}]
    \item $\query:$ A stochastic package query with $K$ probabilistic constraints
    \item $\query_0:$ $\query$ devoid of all probabilistic constraints
    \item $\hat{M}:$ Number of out-of-sample validation scenarios (e.g., $10^{6}$)
    \item $M:$ Initial number of optimization scenarios (e.g., $100$)
    \item $m:$ Iterative increment to $M$ (e.g., $100$)
    \item $z:$ Iterative increment to $Z$ (e.g., $1$)
    \item $\epsilon:$ User-defined approximation error bound, $\epsilon \ge \epsilon_{\text{min}}$
\end{itemize}
\textbf{output}: A \vfeasible package solution $x$, or failure (no solution).
\begin{algorithmic}[1]

    \LineComment{\textit{Solve probabilistically-unconstrained problem}}
        \State $x^{(0)} \gets \Call{Solve}{\SAA(\query_0, \hat{M})}$
  \label{line:defXinit}
    \State $Z = 1$ \Comment{Initial number of summaries}
    \Repeat
        \State $(x, \hat{v}_x) \gets \Call{\CSAsolve}{\query, x^{(0)}, M, Z}$

        \If{$\hat{v}_x$.is\_feasible \textbf{and} $\hat{v}_x$.upper\_bound $\le \epsilon$} 
            \State \Return $x$ \Comment{$x$ is feasible and $({1+\epsilon})$-approximate}
        \ElsIf{$\hat{v}_x$.is\_feasible \textbf{and} $Z < M$}
            \State $Z \gets Z + \min\{z, M-Z\}$ \Comment{Use more summaries}
        \Else
            \State $M \gets M + m$ \Comment{Use more scenarios}
        \EndIf
    \Until
\end{algorithmic}}
\end{flushleft}
\end{algorithm}

\subsection{Query Evaluation with \CSA} \label{subsec:ssa-evaluation}

Algorithm~\ref{algo:ss-evaluation} shows query evaluation with \sss.
The goal is to find a feasible solution whose objective value is as close as possible to $\hat\omega$,
the objective value of the \SAA based on the $\hat M$ validation scenarios.
In the algorithm, $\query_0$ denotes the \spq obtained from $\query$ by removing all of the probabilistic constraints.
At the first step, \sss computes $x^{(0)}$, the solution to the \dilp $\SAA_{\query_0, \hat{M}}$;
the only constraints are deterministic constraints and expectation constraints,
with the latter estimated from $\hat{M}$ scenarios in the usual way.
This corresponds to the ``least conservative'' solution possible,
and is effectively equivalent to solving a \CSA using summaries constructed with $\alpha = 0$,
because $0\%$ (i.e., none) of the scenarios are required to be satisfied.
For some problems, $x^{(0)}$ might have an infinite objective value,
in which case we simply ignore this solution and incrementally increase $\alpha$ until we find a finite solution.

Like \naive, the \sss algorithm starts with an initial number of optimization scenarios,
$M \ge 1$, and iteratively increments it while solutions are \vinfeasible.
In the optimization phase, the algorithm uses a \CSA formulation,
which replaces the $M$ real scenarios with $Z$ conservative summaries.
Initially, the algorithm uses $Z=1$, replacing the set of $M$ scenarios with a single summary. 
After feasibility is achieved for a solution $x$ with objective value $\omega_x$, the algorithm tries to check whether the ratio $\epsilon_x=(\omega_x-\hat\omega)/\hat\omega$
is less than or equal to the user-defined error bound $\epsilon$; although $\hat\omega$, and hence $\epsilon_x$, is unknown, we can conservatively check whether $\epsilon'_x\le \epsilon$, where $\epsilon'_x$ is an upper bound on $\epsilon_x$ that we develop in \Cref{subsec:early-termination}, 
If the solution is unsatisfactory, \sss increases $Z$, and iterates again.
The algorithm stops if and when a \vfeasible and $(1+\epsilon)$-approximate solution is found.
In practice, because of the conservative nature of summaries,
\sss typically finds \vfeasible solutions in drastically fewer iterations than \naive.

\section{Optimal Summary Selection} \label{sec:optSum}

The key component of \sss is \CSAsolve, described in this section.
With $M$ and $Z$ fixed, \CSAsolve finds the best \CSA formulation, i.e.,
the one having, for each constraint, the optimal value of $\alpha$ and the best set $G_z(\alpha)$ of scenarios for each summary. \CSAsolve thus determines the best solution $x$ achievable with $M$ scenarios and $Z$ summaries, and also computes
metadata $\hat{v}_x$ used by \sss for checking feasibility and optimality.

\vspace*{-5pt}
\subsection{\CSAsolve Overview} \label{subsec:ssa-search}

\Cref{algo:summary-mc} depicts the iterative process of \CSAsolve:
at each iteration $q$ it produces a solution $x^{(q)}$ to a problem $\CSA_{\query, M, Z}$
based on an $\alpha_k^{(q)}$-summary for each constraint $C_k$.
Initially, $\alpha^{(0)}_k=0$ for all $k$,
and thus the solution to $\CSA_{\query, M, Z}$ is simply $x^{(0)}$,
which has already been computed by \sss prior to calling \CSAsolve.
Then \CSAsolve stops in two cases:
(1)~if it finds a \vfeasible $(1+\epsilon)$-approximate solution;
(2)~if it enters a cycle, producing the same solution twice with the same $\alpha_k$ values.
In case (2), it returns the ``best'' solution found so far:
if one or more feasible solutions have been found, it returns the one with the best objective value,
otherwise it returns an infeasible solution, and \sss will increase $M$ in its next iteration.

\vspace*{4pt}
\subsection{Choosing $\boldsymbol{\alpha}$} \label{subsec:choosing}
Larger $\alpha$ leads to more conservative $\alpha$-summaries,
as we take the tuple-wise minimum (or maximum) over more and more scenarios.
Thus a high value of $\alpha$ increases the chances of finding a \vfeasible solution.
On the other hand, if the constraints are more restrictive than necessary,
then the solution can have a seriously suboptimal objective value
because we are considering fewer candidate solutions, possibly missing the best ones.
Thus, \CSAsolve seeks the minimally conservative value of $\alpha$ that will suffice.

How can we measure the true conservativeness of $\alpha$ with respect to a constraint
$C \coloneqq \probs{\sum_{i=1}^{N} t_i.\attr{\emph{A}}\;x_i \odot v} \ge p$?
As discussed previously, the solution $x$ to a formulation $\SAA_{\query, M}$
based on $\alpha$-summaries is guaranteed to satisfy at least $100\alpha\%$ of the $M$ optimization scenarios,
but the actual true probability of satisfying the constraint---or more pragmatically,
the fraction of the $\hat{M}$ validation scenarios satisfied by $x$---will usually differ from $\alpha$.
Thus, we look at the difference between the fraction of validation scenarios satisfied by $x$
and the target value~$p$.
We call this difference the \emph{$p$-surplus}, and define it as:
\[
    r = r(\alpha) \coloneqq \biggl\{(1/\hat{M})
    \sum_{j=1}^{\hat{M}}\indiBg{\sum_{i=1}^{N} \hat{s}_{ij}.\attr{\emph{A}}\;x_i \odot v}\biggr\} - p
\]
We expect the function $r(\alpha)$ to be increasing in $\alpha$ with high probability.

Observe that $x$ essentially satisfies the constraint
$C' \coloneqq \probs{\sum_{i=1}^{N} t_i.\attr{A}\;x_i \odot v} \ge p + r$.
Clearly, if $r<0$, then $x$ is infeasible for constraint $C$, whereas if $r>0$,
then $x$ satisfies the inner constraint with a probability that exceeds $p$,
and so is conservative and therefore likely suboptimal.
Thus the optimal value $\alpha^*$ satisfies $r(\alpha^*)=0$.
Solutions that achieve zero $p$-surplus may be impossible to find,
and therefore \CSAsolve tries to choose $\alpha=(\alpha_1,\ldots,\alpha_K)$ to minimize the $p$-surplus for each of the K constraints,
while keeping it nonnegative. The search space is finite (hence the possibility of cycles) since $\alpha_k\in\{Z/M,2Z/M,\ldots,1\}$ for $k\in[1..K]$.

At each iteration $q$, \CSAsolve updates $\alpha^{(q-1)}$ to $\alpha^{(q)}$,
creates the corresponding \CSA problem, and produces a new solution $x^{(q)}$.
For simplicity and ease of computation, our initial implementation updates
each $\alpha_k^{(q)}$ individually by fitting a smooth curve $R^{(q)}_k(\alpha_k)$ to the historical points
$(\alpha_k^{(0)},r_k^{(0)}),\ldots,(\alpha_k^{(q-1)},r_k^{(q-1)})$ and then solving the equation $R^{(q)}_k(\alpha_k)=0$.
In our experiments, we observed that (1) fitting an arctangent function provides the most accurate predictions and (2)
this artificial decoupling with respect to the constraints yields effective summaries;
we plan to investigate other methods for jointly updating $(\alpha_1^{(q-1)},\ldots,\alpha_K^{(q-1)})$.

\begin{algorithm}[t]
    \caption{\sc \CSAsolve} \label{algo:summary-mc}
    \begin{flushleft}{\small
    \begin{itemize}[noitemsep,topsep=0pt,leftmargin=*,label={}]
        \item $\query:$ A stochastic package query with $K$ probabilistic constraints
        \item $x^{(0)}:$ Solution of probabilistically-unconstrained problem
        \item $M:$ Number of optimization scenarios
        \item $Z:$ Number of summaries, $1 \le Z \le M$
        \item $\epsilon:$ User-defined approximation error bound, $\epsilon \ge \epsilon_{\text{min}}$
    \end{itemize}
    \textbf{output}: A \vfeasible and $({1+\epsilon})$-approximate solution, or an \vinfeasible solution
    \begin{algorithmic}[1]
        \State $q \gets 0$    \Comment{\textit{Iteration count}}
        \State $\mathcal{H} \gets \emptyset$\Comment{\textit{Initialize validation history}}
        \State $\alpha^{(q)}=(\alpha_1^{(q)},\ldots,\alpha_K^{(q)}) \gets (0,\ldots,0)$ \Comment{\textit{Initial conservativeness}}

        \Repeat
        \LineComment{\textit{If entered a cycle, return best solution from history}}
        \If{$(x^{(q)},\alpha^{(q)})\in \mathcal{H}$}
          \State
          \Return $\Call{Best}{\{x : (x,\alpha) \in \mathcal{H}\}}$
        \EndIf
        \State $\mathcal{H}\gets \mathcal{H}\cup\{(x^{(q)},\alpha^{(q)})\}$  \Comment{\textit{Update validation history}}
        \State $\hat{v}^{(q)} \gets \Call{Validate}{x^{(q)}, \query, \hat{M}}$
               \Comment{\textit{Validate \& compute metadata}} \label{line:validation}
        \State $\epsilon^{(q)} \gets \hat{v}^{(q)}.\text{upper\_bound}$
              \Comment{\textit{Validation upper bound on $\epsilon$}}
        \For{$k = 1, \dots, K$}
          \State $r_{k}^{(q)} \gets \hat{v}^{(q)}_k.\text{surplus}$
          \Comment{\textit{Validation $p$-surplus}}
         
        \EndFor

        \LineComment{\textit{Termination with \vfeasible $({1+\epsilon})$-approximate solution}}
        \If{$\epsilon^{(q)} \le \epsilon$ \textbf{and} $\forall k: r_{k}^{(q)} \ge 0$}
          \label{line:earlyTerm}
          \State \Return $(x^{(q)}, \hat{v}^{(q)})$
        \EndIf

        \State $q \gets q + 1$ \Comment{Iterate again with a new set of summaries}
        \State $\alpha^{(q)} \gets$ \Call{GuessOptimalConservativeness}{$\mathcal{H}$}
        \For{$k = 1, \dots, K$}
          \State $\tilde{S}_k \gets \Call{Summarize}{x^{(q)}, \alpha_{k}^{(q)}, C_k, \mathcal{H}}$
        \EndFor

        \State $\CSA_{\query,M,Z} \gets \Call{FormulateSAA}{\query, \{\tilde{S}_1, \dots, \tilde{S}_K\}}$
        \State $x^{(q)} \gets \Call{Solve}{\CSA_{\query,M,Z}}$
        \Until
    \end{algorithmic}}
    \end{flushleft}
\end{algorithm}

\subsection{Choosing $\mathbf{G_z}$} \label{subsec:choosing-G}
So far, we have assumed that the subset $G_z(\alpha_k^{(q)})$
used to build the summary is \emph{any} set containing $n_k^{(q)}=\ceil{\alpha_k^{(q)}|\Pi_z|}$ scenarios.
\sss employs a simple greedy heuristic to determine $G_z(\alpha_k^{(q)})$:
it chooses the $n_k^{(q)}$ scenarios that produce the summary most likely to keep the previous solution
feasible in the current iteration, so that the new solution will likely have a higher objective value.
For an inner $\ge$ ($\le$) constraint, this is achieved by sorting the scenarios in $\Pi_z$ according to their
``scenario score'' $\sum_{i=1}^{N} s_{ij}.\attr{A}\;x_i^{(q-1)}$ and taking the first $n_k^{(q)}$
in descending (ascending) order.

\subsection{Approximation Guarantees} \label{subsec:early-termination}
If $x^{(q)}$ is \vfeasible, \sss can terminate if it can determine that $x^{(q)}$ is
$(1 + \epsilon)$-approximate relative to the optimal \vfeasible solution $\hat{x}$ based on the validation scenarios,
i.e., that $\omega^{(q)} \le ({1+\epsilon})\hat{\omega}$,
where $\omega^{(q)}$ and $\hat{\omega}$ are the objective values for $x^{(q)}$ and $\hat{x}$, respectively,
and $\epsilon$ is an accuracy parameter specified by the user.
Without loss of generality, we assume below that the objective function is an expectation;
should the objective be deterministic, nesting it within an expectation does not change its value.

This termination check proceeds as follows.
During the $q$th iteration of \sss, the function \Validate{$x^{(q)}, \query, \hat{M}$}
computes  $p$-surplus values $r_1^{(q)},\ldots,r_K^{(q)}$,
one for each probabilistic constraint in the query.
Further, it computes $\epsilon^{(q)}$ (as defined below).
We show below that if $\epsilon^{(q)} \le \epsilon$ and $\forall k: r_k^{(q)} \ge 0$,
then $x^{(q)}$ is a \vfeasible $({1+\epsilon})$-approximate solution,
and \sss can immediately return $x^{(q)}$ and terminate.
As usual, we focus on minimization problems with nonnegative objective values,
and take the optimal solution $\hat x$ and objective value $\hat\omega$ of $\SAA_{\query,\hat M}$
as proxies for those of the original \silp.
We start with the following simple but important result.

\begin{proposition}[General Approximation Guarantee]
    \label{th:unsupp-approx-mc}
    Let $\epsilon\ge 0$ and let $\underline{\omega}$
    be a positive constant such that  $\underline{\omega}\le \hat\omega$.
    Set $\epsilon^{(q)}=(\omega^{(q)}/\underline{\omega})-1$.
    If $\epsilon^{(q)} \le \epsilon$,
    then $\omega^{(q)} \le {(1 + \epsilon)} \hat\omega$.
\end{proposition}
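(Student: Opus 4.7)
The plan is to treat this as a short algebraic chain: rearrange the definition of $\epsilon^{(q)}$ under the given hypothesis to obtain a bound on $\omega^{(q)}$ in terms of $\underline{\omega}$, then substitute the assumed lower bound $\underline{\omega} \le \hat\omega$. The result is essentially a ``sandwich'' via the positive constant $\underline{\omega}$: we compare $\omega^{(q)}$ to $\hat\omega$ by going through $\underline{\omega}$.

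Concretely, I would first observe that since $\underline{\omega} > 0$, the definition $\epsilon^{(q)} = \omega^{(q)}/\underline{\omega} - 1$ can be rewritten as $\omega^{(q)} = (1+\epsilon^{(q)})\underline{\omega}$. Combining with the hypothesis $\epsilon^{(q)} \le \epsilon$ and the fact that $\underline{\omega} > 0$, multiplication preserves the inequality direction and yields $\omega^{(q)} \le (1+\epsilon)\underline{\omega}$. This is the one nontrivial sign-check: we are multiplying an inequality between $\epsilon^{(q)}$ and $\epsilon$ by the positive quantity $\underline{\omega}$, which is explicitly assumed positive in the statement.

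Next I would invoke the second hypothesis $\underline{\omega} \le \hat\omega$. Because $\epsilon \ge 0$, the factor $1+\epsilon$ is nonnegative (indeed at least $1$), so multiplying the inequality $\underline{\omega} \le \hat\omega$ by $1+\epsilon$ preserves its direction and gives $(1+\epsilon)\underline{\omega} \le (1+\epsilon)\hat\omega$. Chaining this with the bound from the previous step yields $\omega^{(q)} \le (1+\epsilon)\hat\omega$, which is precisely the conclusion.

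The proposition itself is essentially an algebraic tautology, so there is no serious obstacle in its proof; the only care required is the sign-bookkeeping noted above (that $\underline{\omega} > 0$ and $1+\epsilon \ge 0$ let us multiply the two inequalities without flipping them). The genuine content of the result lies elsewhere: it reduces the question ``is $x^{(q)}$ a $(1+\epsilon)$-approximate solution?'' to the question ``can we compute, cheaply and during validation, a positive lower bound $\underline{\omega}$ on the unknown optimum $\hat\omega$?'' That constructive step—finding a tight and efficiently computable $\underline{\omega}$—is where the real difficulty will lie in the subsequent developments, but it is outside the scope of this particular proposition.
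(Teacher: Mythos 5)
Your proof is correct and is essentially the same one-line algebraic chain as the paper's, just with the two inequalities applied in the opposite order: you bound $\omega^{(q)} = (1+\epsilon^{(q)})\underline{\omega} \le (1+\epsilon)\underline{\omega} \le (1+\epsilon)\hat\omega$, whereas the paper writes $\omega^{(q)} \le (\hat\omega/\underline{\omega})\,\omega^{(q)} = (1+\epsilon^{(q)})\hat\omega \le (1+\epsilon)\hat\omega$. The only (minor) difference is that your ordering avoids the paper's implicit use of $\omega^{(q)}\ge 0$ in its first inequality; your sign-bookkeeping is otherwise identical and complete.
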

\begin{proof}
    Suppose that $\epsilon^{(q)} \le \epsilon$.
    Since $\hat\omega/\underline{\omega}\ge 1$, we have
    \[
        \omega^{(q)}\le \Bigl(\frac{\hat\omega}{\underline{\omega}}\Bigr)\omega^{(q)}
        =\biggl(1+\Bigl(\frac{\omega^{(q)}}{\underline{\omega}}-1\Bigr)\biggr)\hat\omega
        = \bigl(1+\epsilon^{(q)}\bigr)\hat\omega\le (1+\epsilon)\hat\omega,
    \]
and the result follows.
\end{proof}

We obtain a specific formula for $\epsilon^{(q)}$
by choosing a specific bound $\underline{\omega}$.
Clearly, we would like to choose $\underline{\omega}$ as large as possible,
since this maximizes the likelihood that $\epsilon^{(q)} \le \epsilon$.
One simple choice that always works is to set $\underline{\omega}=\omega^{(0)}$,
where $\omega^{(0)}$ is the objective value of the \SAA problem corresponding to the original \silp
but with all probabilistic constraints removed---see line~\ref{line:defXinit} of Algorithm~\ref{algo:ss-evaluation}.
If all random variables are lower-bounded by a constant $\underline{s} > 0$ and the size of any feasible package is lower-bounded by a constant $\underline{l} > 0$,
then $\sum_{i=1}^N\hat{s}_{ij}.\attr{A}\;x_i\ge  \underline{s}\underline{l}$, $\forall j\in[1..\hat M]$, so that
\[
    \hat\omega =\frac{1}{\hat M}\sum_{j=1}^{\hat M}\sum_{i=1}^N\hat{s}_{ij}.\attr{A}\;\hat{x}_i
    \ge \frac{1}{\hat M}\sum_{j=1}^{\hat M}\underline{s}\underline{l}
    = \underline{s}\underline{l},
\]
which yields an alternative lower bound.
Yet another bound can be sometimes obtained by exploiting the relation of the constraints to the objective.

\begin{definition}[Objective-Constraint Interaction]\label{def:supportiveness}
    Let the objective be $\min\, \expes{\textstyle\sum_{i=1}^{N} \xi_i x_i}$, for random variables $\{\xi_i\}_{i\in[1..N]}$.
    The objective is said to be \emph{supported} by a constraint of the form
    $\probbg{\sum_{i=1}^{N} \xi_i x_i \le v} \ge p$ and \emph{counteracted} by a constraint of the form
    ${\probs{\textstyle\sum_{i=1}^{N} \xi_i x_i \ge v} \ge p}$. All other forms of constraint
    are said to be \emph{independent} of the objective.
\end{definition}

Intuitively, a supporting probabilistic constraint ``supports'' the objective function
in the same ``direction'' of the optimization ($\le$ for minimization, $\ge$ for maximization),
whereas a counteracting constraint goes against the optimization.
If there exists a counteracting constraint with $v \ge 0$, it can be shown (\Cref{sec:optimality}) that
$\hat\omega\ge pv$.

Finally, we take $\underline\omega$ to be the maximum of all applicable lower bounds.
Similar formulas can be derived for other possible cases---maximization problems, negative objective values, and so on;
see \Cref{sec:optimality}.

Note that if $(\hat\omega/\underline{\omega})-1>\epsilon$, then $\epsilon^{(q)}=(\omega^{(q)}/\underline{\omega})-1>\epsilon$, $\forall q\ge 0$,
so that \sss cannot terminate with a \vfeasible $(1+\epsilon)$-approximate solution.
To avoid this problem, we require that $\epsilon\ge\epsilon_{\text{min}}$,
where $\epsilon_{\text{min}}=(\overline\omega/\underline\omega)-1$.
Here  $\overline\omega$ is any upper bound on $\hat\omega$.
It can be shown, for example, that if
(1)~all random variables are upper-bounded by a constant $\overline{s} > 0$,
(2)~the size of any feasible package is upper-bounded by a constant $\overline{l} > 0$, and
(3)~there exists a supporting constraint with $v \ge 0$,
then $\hat\omega\le v+(1-p)\bar s\bar l$; see \Cref{sec:optimality}.
If we have available a feasible solution $x$ with objective value $\omega_x$,
then we can take $\overline{\omega}=\omega_x$.
We choose $\overline{\omega}$ to be the minimum of all applicable bounds.

\vspace{6pt}
\subsection{Implementation Considerations}\label{subsec:implementation}

We now discuss several implementation optimizations.

\para{Efficient summary generation} \label{subsec:scenario-construction}
Recall that summarization has two steps:
(1)~computing the scenario scores to sort scenarios by the previous solution, and (2)~computing the tuple-wise minimum (or maximum) of the first $\alpha\%$ of the scenarios in sorted order.
The fastest way to generate an $\alpha$-summary is if all $M$ scenarios are generated and kept in main memory at all times.
In this case, computing the tuple-wise minimum (or maximum) is trivial.
However, the $\Theta(MNK)$ memory requirement for this may exceed the memory limits if $M$ is large.
We devise two possible strategies for memory-efficient summary generation with optimal $\Theta(NZK)$ space complexity:
\emph{tuple-wise summarization} and \emph{scenario-wise summarization}.
Tuple-wise summarization uses a unique random number seed for each tuple ($i=1,\dots,N$)
and it generates all $M$ realizations, one tuple at a time.
Scenario-wise summarization uses a unique seed for each scenario ($j=1,\dots,M$),
and it generates one realization for all tuples, one scenario at a time.

With tuple-wise summarization, sorting the scenario only requires $\Theta(PM)$ time,
where $P=\sum_{i=1}^{N}x_i$ is the size of the current package;
usually, $P \ll N$.
However, generating the summaries is more costly, as it requires $\Theta(NM)$ time,
as all $M$ realizations must be constructed for all $N$ tuples.
The total time is $\Theta(M(P+N))$.
With scenario-wise summarization, generating summaries has lower time complexity of $\Theta(\alpha N M)$,
as it only generates scenarios in $G_z(\alpha)$, but sorting has higher complexity $\Theta(NM)$,
with total time $\Theta(NM (\alpha + 1))$.

It follows that if $\alpha \ge P/N$, tuple-wise summarization is generally faster than scenario-wise summarization.
However, other factors may affect the runtime, e.g., some random number generators,
such as Numpy, generate large quantities of random numbers faster if generated in bulk using a single seed.
In this case, tuple-wise summarization may suffer considerably in the summary generation phase,
as it needs to re-seed the random number generator for each tuple.
In our experiments, we observed that tuple-wise summarization is better when the input table is relatively small,
but worse than scenario-wise for larger tables.
In general, a system should implement both methods and test the two in situ.

\para{Convergence acceleration}
When $\alpha_k^{(q)}$ is obtained by \emph{decreasing} $\alpha_k^{(q-1)}$,
the solution $x^{(q-1)}$ typically is feasible,
and our goal is for $x^{(q)}$ to strictly improve in objective value.
\CSAsolve achieves this by slightly modifying the generation of summaries
in order to ensure that the previous solution is still feasible for the next \CSA problem.
This is done by using the tuple-wise maximum (instead of minimum)
in the summary generation for all tuples $t_i$ such that $x^{(q-1)}_i > 0$ (tuples in the previous solution).
For all other tuples, we set the summary as usual.
We have found that ensuring monotonicity of the objective values promotes faster convergence.

\section{Experimental Evaluation} \label{sec:experiments}

\setlength\tabcolsep{1.5pt} 
\renewcommand{\tabularxcolumn}[1]{>{\centering\arraybackslash}m{#1}}

\begin{figure*}
    \begin{mdframed}[linecolor=gray,innerleftmargin=10,innerrightmargin=10,innertopmargin=1,innerbottommargin=1]
    \centering{\scriptsize{
    \begin{tabularx}{\textwidth}{XX}
    \textbf{\naive}
    \includegraphics[width=5pt]{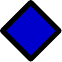}
    &
    \textbf{\sss}
    \includegraphics[width=5pt]{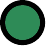}
    \end{tabularx}}}
    \end{mdframed}
    \vspace*{8pt}
    \scriptsize{
    \centering
    \begin{tabularx}{1\textwidth}{m{0.2cm}m{0.1cm}XXXXXXXX}
    & &
    \hspace{1.3em}Q1 & \hspace{1.3em}Q2 & \hspace{1.3em}Q3 & \hspace{1.3em}Q4 &
    \hspace{1.3em}Q5 & \hspace{1.3em}Q6 & \hspace{1.3em}Q7 & \hspace{1.3em}Q8\\
    \rotatebox{90}{\textbf{Galaxy}} &
    \rotatebox{90}{time (s)} &
    \plot{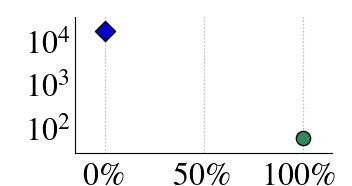}&
    \plot{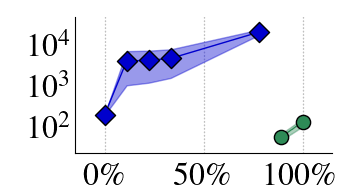}&
    \plot{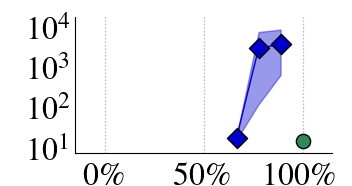}&
    \plot{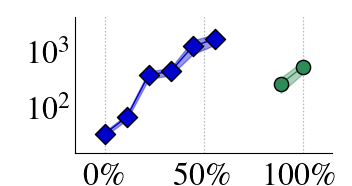}&
    \plot{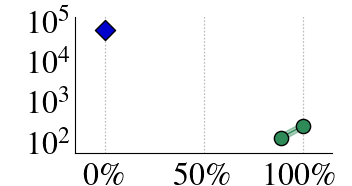}&
    \plot{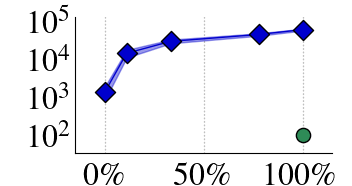}&
    \plot{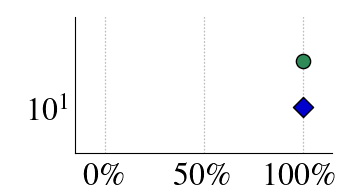}&
    \plot{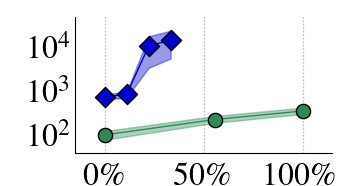}
    \end{tabularx}}
    \centering\scriptsize{
    \begin{tabularx}{1\textwidth}{m{0.2cm}m{0.1cm}XXXXXXXX}
    \rotatebox{90}{\textbf{Portfolio}} &
    \rotatebox{90}{\;\;time (s)}&
    \plot{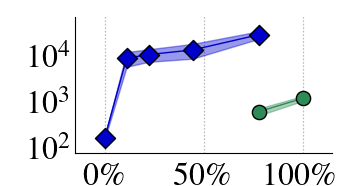}&
    \plot{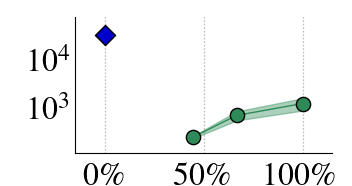}&
    \plot{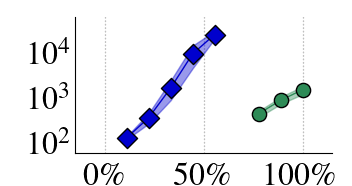}&
    \plot{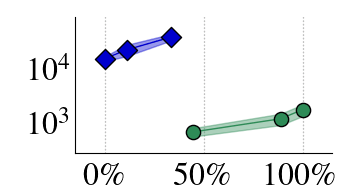}&
    \plot{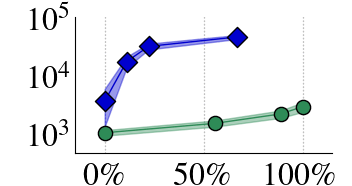}&
    \plot{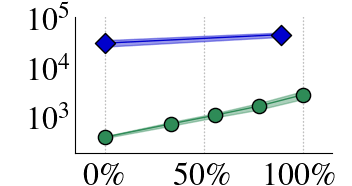}&
    \plot{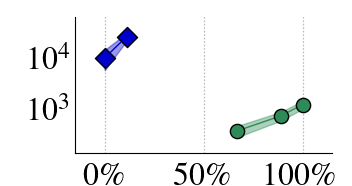}&
    \plot{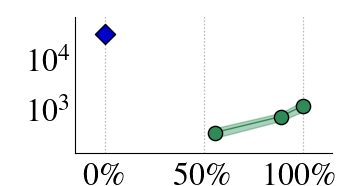}
    \end{tabularx}}
    \centering\scriptsize{
    \begin{tabularx}{1\textwidth}{m{0.2cm}m{0.1cm}XXXXXXXX}
    \rotatebox{90}{\textbf{\tpch}} &
    \rotatebox{90}{time (s)}&
    \plot{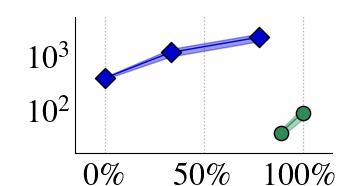}&
    \plot{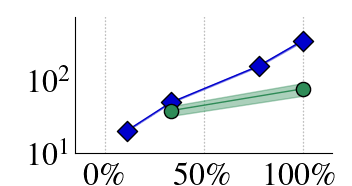}&
    \plot{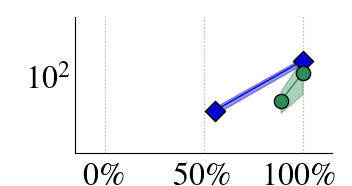}&
    \plot{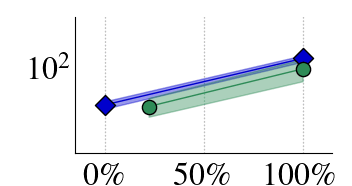}&
    \plot{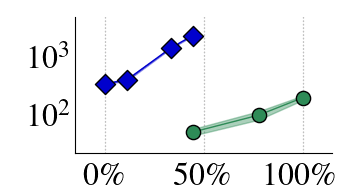}&
    \plot{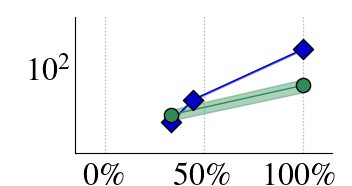}&
    \plot{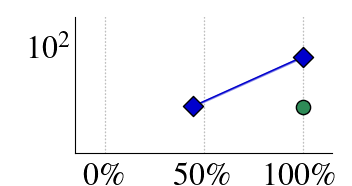}&
    \plot{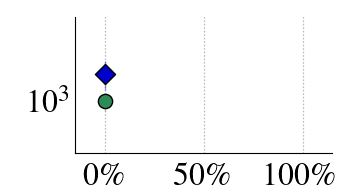}
    \\ & &
    \hspace{1.2em}\textbf{Feasibility Rate} &
    \hspace{1.2em}\textbf{Feasibility Rate} &
    \hspace{1.2em}\textbf{Feasibility Rate} &
    \hspace{1.2em}\textbf{Feasibility Rate} &
    \hspace{1.2em}\textbf{Feasibility Rate} &
    \hspace{1.2em}\textbf{Feasibility Rate} &
    \hspace{1.2em}\textbf{Feasibility Rate} &
    \hspace{1.2em}\textbf{Feasibility Rate}
    \end{tabularx}}
    \caption{
    End-to-end results of \sss vs. \naive.
    Plotting the average time (and 95\% confidence intervals) to reach 100\% feasibility rate.
    Of the 23 feasible queries (\tpch Q8 is infeasible), \sss always reaches 100\% feasibility rate, while \naive in only 7 queries.
    In 15 queries, when \sss succeeds, \naive is still at 0\% feasibility.
    \sss can be orders of magnitude faster even when both reach 100\% feasibility.
    }
    \label{fig:time-to-feasible}
    \vspace{18pt}
\end{figure*}

In this section, we present an experimental evaluation of our techniques
for stochastic package queries on three different domains where uncertainty naturally arises:
noise in sensor data, uncertainty in future predictions, uncertainty due to data integration~\cite{dong2009data}.
Our results show that:
(1)~\sss is always able to find \vfeasible solutions,
while \naive cannot in most cases---when both \sss and \naive can find \vfeasible solutions,
\sss is often faster by orders of magnitude;
(2)~The packages produced by \sss are of high quality (low empirical approximation ratio),
sometimes even better than \naive when they both produce feasible solutions;
(3)~Increasing $M$, the number of optimization scenarios, helps \sss find \vfeasible solutions,
and the value of $M$ required by \sss to start producing \vfeasible solutions is much smaller than \naive,
explaining the orders of magnitude improvement in running time;
(4)~Increasing $Z$, the number of summaries, helps \sss find higher-quality solutions;
(5)~Increasing $N$, the number of input tuples, impacts the running time of both algorithms,
but \sss is still orders of magnitude faster than \naive,
and finds \vfeasible solutions with better empirical approximation ratios than \naive.

\vspace*{5pt}
\subsection{Experimental Setup} \label{subsec:experimental-setup}
We now describe the software and runtime environment, and the three workloads we used in
the experiments.

\para{Environment}
We implemented our methods in Python 2.7, used Postgres 9.3.9 as the underlining DBMS,
and IBM \cplex 12.6 as the \ilp solver.
We ran our experiments on servers equipped with two 24 2.66GHz cores, 15GB or RAM, and a 7200 RPM 500GB hard drive.

\para{Datasets and queries}
We constructed three workloads:

\noindent\emph{Noisy sensor measurements}:
The Galaxy datasets vary between $55{,}000$ and $274{,}000$ tuples,
extracted from the Sloan Digital Sky Survey (\sdss)~\cite{sdss_dr12}.
Each tuple contains the color components of a small portion of the sky as read by a telescope.
We model the uncertainty in the telescope readings as Gaussian or Pareto noise.

\smallskip\noindent\emph{Financial predictions}:
The Portfolio dataset contains 6,895 stocks downloaded from Yahoo Finance~\cite{EmptyId-10}.
The initial price of each stock is set according to its actual value on January 2, 2018,
and future prices are generated according to a geometric Brownian motion.
We consider selling stocks in one day or in one week, 
as in \Cref{fig:portfolio}; the dataset for the short-term (resp., long-term) trades contains 14,000 (resp., 48,000) tuples.
For each prediction type, we also extracted a subset corresponding to the $30\%$ most volatile stocks
to construct some of the hardest queries.
Tuples referring to the same stock are correlated to one another.
For example, in \Cref{fig:portfolio}, tuples 1 and 2 are correlated to each other
but are independent of the other tuples.

\smallskip\noindent\emph{Data integration}:
The \tpch dataset consists of about 117,600 tuples extracted from the \tpch benchmark~\cite{tpch}.
We simulate the result of hypothetically integrating several data sources to form this data set:
we model uncertainty in each  attribute's value with discrete probability distributions.
For each original (deterministic) value in the \tpch dataset, we generate $D$ possible variations thereof,
where $D$ is the number of data sources that have been integrated into one.
The mean of these $D$ values is anchored around the original value;
each source value is sampled from an exponential, Poisson, uniform or Student's t-distribution.

\begin{figure*}
    \begin{mdframed}[linecolor=gray,innerleftmargin=10,innerrightmargin=10,innertopmargin=1,innerbottommargin=1]
    \centering{\scriptsize{
    \begin{tabularx}{\textwidth}{XcXcX}
    &
    \textbf{\naive (feasibility rate)}
    \includegraphics[width=5pt]{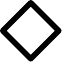} 0\%
    \includegraphics[width=5pt]{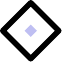} 25\%
    \includegraphics[width=5pt]{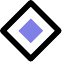} 50\%
    \includegraphics[width=5pt]{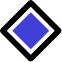} 75\%
    \includegraphics[width=5pt]{figs/plots/legend/mc_1_0.pdf} 100\%
    & &
    \textbf{\sss (feasibility rate)}
    \includegraphics[width=5pt]{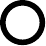} 0\%
    \includegraphics[width=5pt]{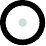} 25\%
    \includegraphics[width=5pt]{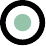} 50\%
    \includegraphics[width=5pt]{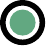} 75\%
    \includegraphics[width=5pt]{figs/plots/legend/ss_1_0.pdf} 100\%
    &
    \end{tabularx}}}
    \end{mdframed}
    \vspace*{2pt}
    \scriptsize{
    \centering
    \begin{tabularx}{1\textwidth}{m{0.2cm}m{0.1cm}XXXXXXXX}
    & &
    \hspace{1.3em}Q1 & \hspace{1.3em}Q2 & \hspace{1.3em}Q3 & \hspace{1.3em}Q4 &
    \hspace{1.3em}Q5 & \hspace{1.3em}Q6 & \hspace{1.3em}Q7 & \hspace{1.3em}Q8\\
    \multirow[c]{3}{*}[-0.2cm]{\rotatebox{90}{\textbf{Galaxy}}} &
    \rotatebox{90}{time (s)} &
    \plot{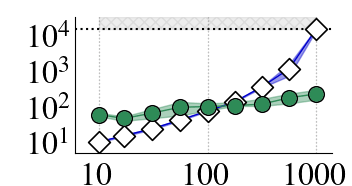}&
    \plot{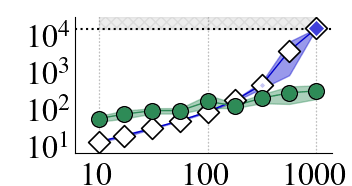}&
    \plot{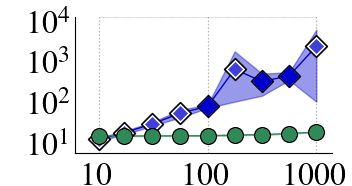}&
    \plot{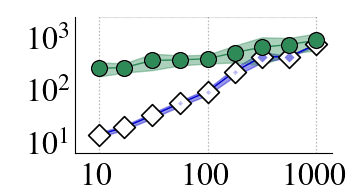}&
    \plot{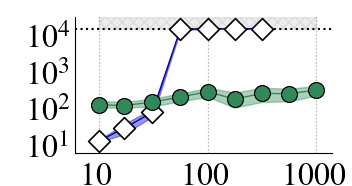}&
    \plot{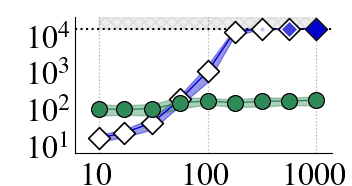}&
    \plot{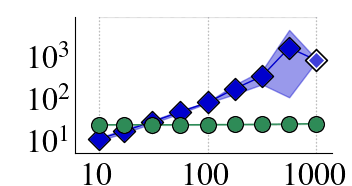}&
    \plot{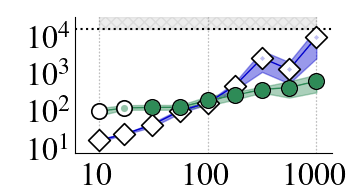}\\
    & \rotatebox{90}{$1 + \hat\epsilon$} &
    \plot{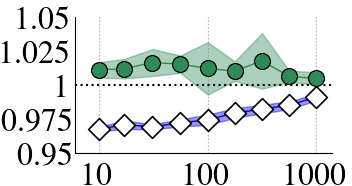}&
    \plot{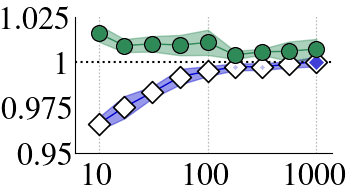}&
    \plot{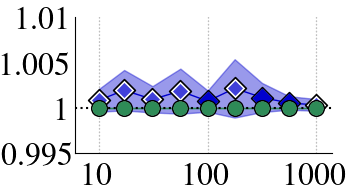}&
    \plot{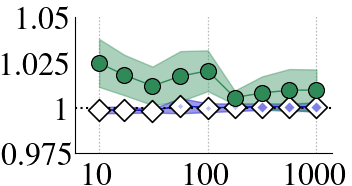}&
    \plot{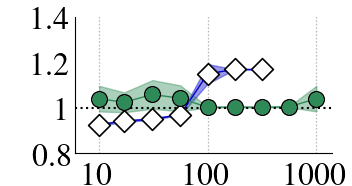}&
    \plot{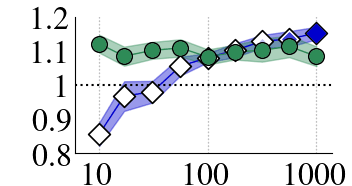}&
    \plot{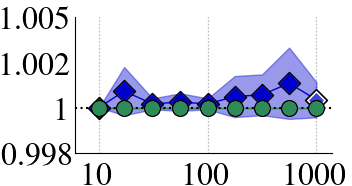}&
    \plot{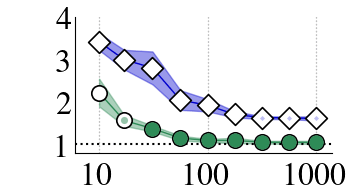}\\
    \end{tabularx}}
    {\color{gray}\noindent\rule{\textwidth}{0.1pt}}
    \centering\scriptsize{
    \begin{tabularx}{1\textwidth}{m{0.2cm}m{0.1cm}XXXXXXXX}
    \multirow[c]{3}{*}[-0.2cm]{\rotatebox{90}{\textbf{Portfolio}}} &
    \rotatebox{90}{\;\;time (s)} &
    \plot{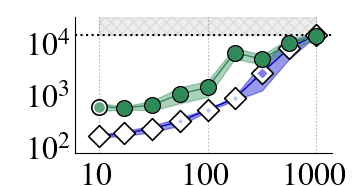}&
    \plot{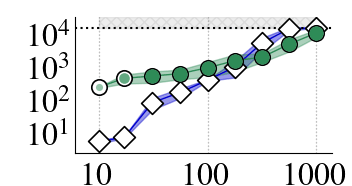}&
    \plot{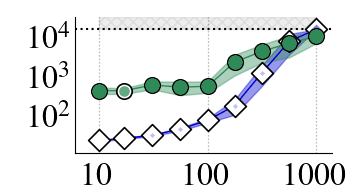}&
    \plot{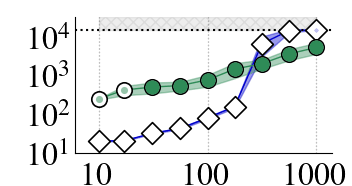}&
    \plot{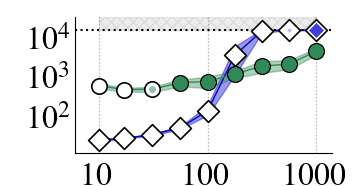}&
    \plot{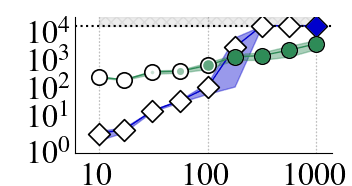}&
    \plot{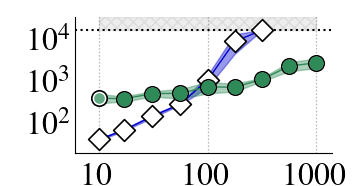}&
    \plot{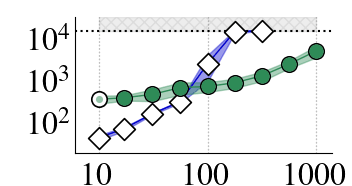}\\
    &
    \rotatebox{90}{$1+\hat\epsilon$}&
    \plot{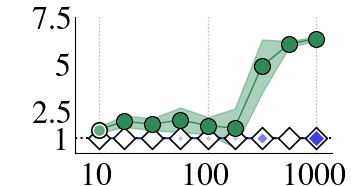}&
    \plot{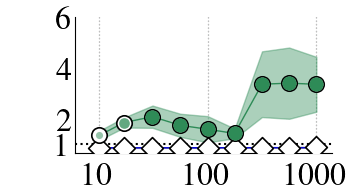}&
    \plot{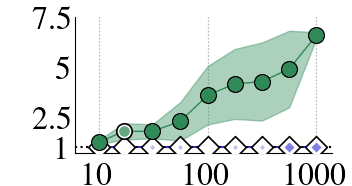}&
    \plot{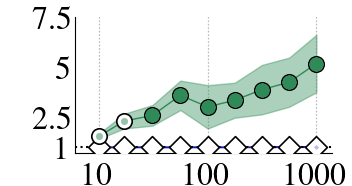}&
    \plot{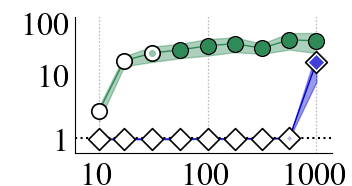}&
    \plot{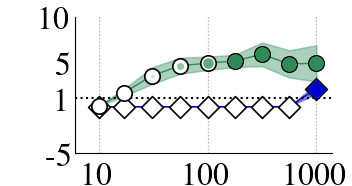}&
    \plot{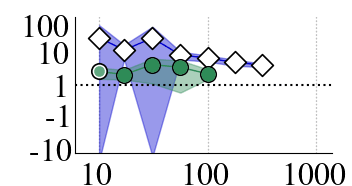}&
    \plot{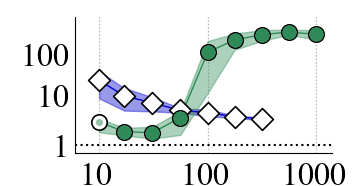}\\
    \end{tabularx}}
    {\color{gray}\noindent\rule{\textwidth}{0.1pt}}
    \centering\scriptsize{
    \begin{tabularx}{1\textwidth}{m{0.2cm}m{0.1cm}XXXXXXXX}
    \multirow[c]{3}{*}[-0.2cm]{\rotatebox{90}{\textbf{\tpch}}} &
    \rotatebox{90}{time (s)} &
    \plot{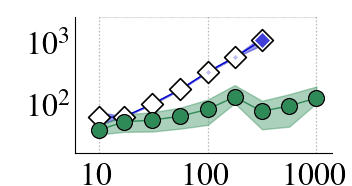}&
    \plot{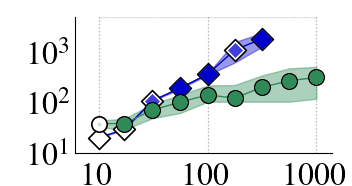}&
    \plot{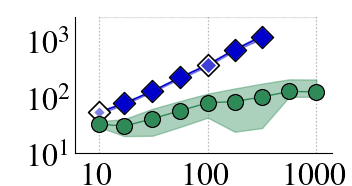}&
    \plot{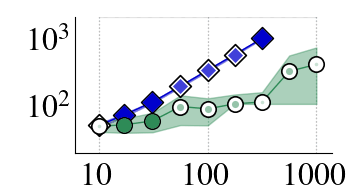}&
    \plot{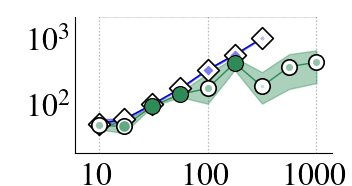}&
    \plot{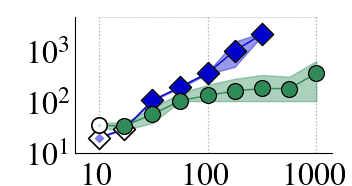}&
    \plot{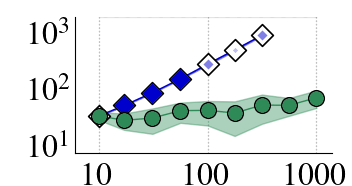}&
    \plot{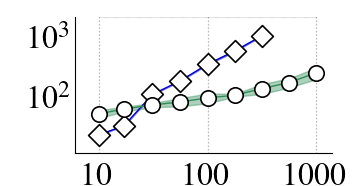}\\
    &
    \rotatebox{90}{$1+\hat\epsilon$}&
    \plot{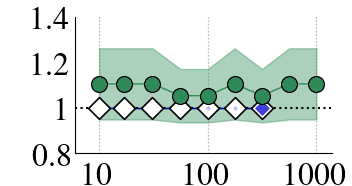}&
    \plot{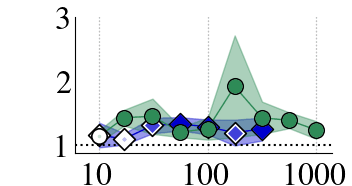}&
    \plot{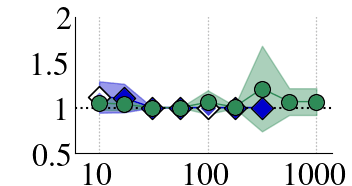}&
    \plot{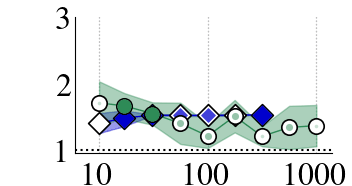}&
    \plot{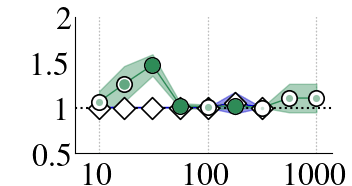}&
    \plot{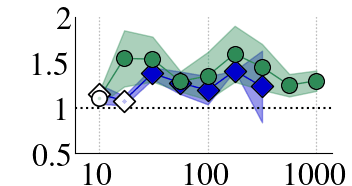}&
    \plot{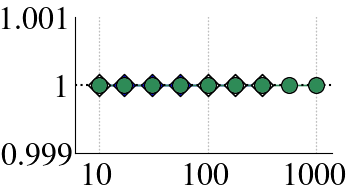}&
    \plot{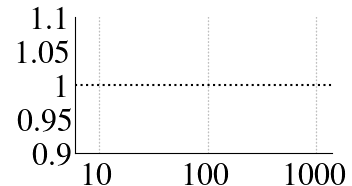}\\
    & &
    \hspace{1.5em}\textbf{N. Scenarios} & \hspace{1.5em}\textbf{N. Scenarios} & \hspace{1.5em}\textbf{N. Scenarios} &
    \hspace{1.5em}\textbf{N. Scenarios} & \hspace{1.5em}\textbf{N. Scenarios} & \hspace{1.5em}\textbf{N. Scenarios} &
    \hspace{1.5em}\textbf{N. Scenarios} & \hspace{1.5em}\textbf{N. Scenarios}
    \end{tabularx}}
    \caption{
    Scalability of \naive and \sss with increasing number of optimization scenarios.
    \naive struggles to find feasible solutions even with a large number of scenarios and often fails completely
    (missing points in the plot).
    \sss quickly finds feasible solutions with few scenarios.
    The approximation ratios of \sss's solutions are generally low when
    the number of scenarios is small.
    }
    \label{fig:n-scenarios-scale}
\end{figure*}

\smallskip
For each of the three datasets, we constructed a workload of eight \spaql queries;
all 24 queries, except one in \tpch, are feasible.
The workloads span seven different distributions for the uncertain data attributes,
including a complex VG function to predict future stock prices.
The objective functions are supported by the constraints for the Portfolio queries,
independent for the TPC-H queries and either supported or counteracted for the Galaxy queries
(see \Cref{def:supportiveness} for supported/counteracted/independent objectives).
The Portfolio workload tests high- and low-risk, high- and low-VaR (Value at Risk)---i.e., $p$ and $v$ in \Cref{eq:probconst}---as well as short- and long-term trade predictions.
The \tpch workload is split into queries with $D=3$ and $D=10$ (number of integrated sources).
For all queries there are two constraints, one of which is probabilistic with $p \geq 0.9$.
Examples include:
(1)~for Galaxy, we seek a set of five to ten sky regions that minimizes total expected radiation flux
while avoiding total flux levels higher than 40 with high probability, and
(2)~for \tpch, we seek a set of between one and ten transactions having maximum expected total revenue,
while containing less than 15 items total with high probability.
A detailed description of the workloads can be found in \Cref{sec:workload}.

\para{Evaluation metrics}
We measure \emph{response time} (in seconds and logarithmic scale)
across $10$ i.i.d runs using different seeds
for generating the optimization scenarios,
and evaluate feasibility and the objective value on an out-of-sample validation set
with $10^6$ scenarios ($10^7$ for the Portfolio workload).
We plot the average across the $10$ runs, and its $95\%$ confidence interval in a shaded area.
For each run of an algorithm, we set a time limit of four hours.
When the time limit expires, we interrupt \cplex and get the best solution found by the solver until then.
We measure \emph{feasibility rate} as the fraction, out of the $10$ runs,
in which a method produces a feasible solution (including, for all methods, when the time limit expired).
Because the true optimal solution for any of the queries is unknown,
we measure \emph{accuracy} by  $1+\hat\epsilon$,
where $\hat\epsilon \coloneqq \omega / \omega^* - 1$ and $\omega^*$ is
the objective value of the best feasible solution found by any of the methods.

\subsection{Results and Discussion} \label{subsec:results}
We evaluate four fundamental aspects of our algorithms:
(1)~query response time to reach $100\%$ feasibility rate;
(2)~scalability with increasing number of scenarios ($M$);
(3)~scalability of \sss with increasing number of summaries ($Z$);
(4)~scalability with increasing dataset size ($N$).

\newcolumntype{Y}{>{\centering\arraybackslash}X}
\begin{figure*}
    \begin{mdframed}[linecolor=gray,innerleftmargin=10,innerrightmargin=10,innertopmargin=1,innerbottommargin=1]
    \centering{\scriptsize{\begin{tabularx}{\textwidth}{XcXcX}
    &
    \textbf{\naive (feasibility rate)}
    \includegraphics[width=5pt]{figs/plots/legend/mc_0_0.pdf} 0\%
    \includegraphics[width=5pt]{figs/plots/legend/mc_0_25.pdf} 25\%
    \includegraphics[width=5pt]{figs/plots/legend/mc_0_5.pdf} 50\%
    \includegraphics[width=5pt]{figs/plots/legend/mc_0_75.pdf} 75\%
    \includegraphics[width=5pt]{figs/plots/legend/mc_1_0.pdf} 100\%
    & &
    \textbf{\sss (feasibility rate)}
    \includegraphics[width=5pt]{figs/plots/legend/ss_0_0.pdf} 0\%
    \includegraphics[width=5pt]{figs/plots/legend/ss_0_25.pdf} 25\%
    \includegraphics[width=5pt]{figs/plots/legend/ss_0_5.pdf} 50\%
    \includegraphics[width=5pt]{figs/plots/legend/ss_0_75.pdf} 75\%
    \includegraphics[width=5pt]{figs/plots/legend/ss_1_0.pdf} 100\%
    &
    \end{tabularx}}}\end{mdframed}
    \vspace*{2pt}
\centering\scriptsize{
\begin{tabularx}{1\textwidth}{m{0.1cm}XXXXXXXX}
    &
    \hspace{1.3em}Q1 & \hspace{1.3em}Q2 & \hspace{1.3em}Q3 & \hspace{1.3em}Q4 &
    \hspace{1.3em}Q5 & \hspace{1.3em}Q6 & \hspace{1.3em}Q7 & \hspace{1.3em}Q8\\
    \rotatebox{90}{time (s)}&
    \plot{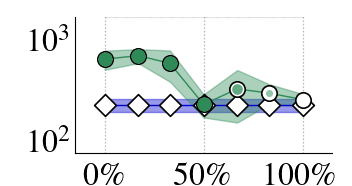}&
    \plot{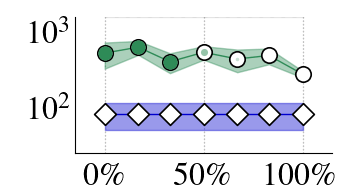}&
    \plot{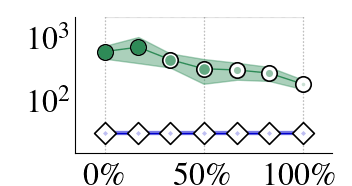}&
    \plot{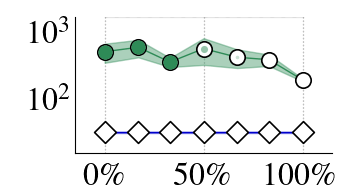}&
    \plot{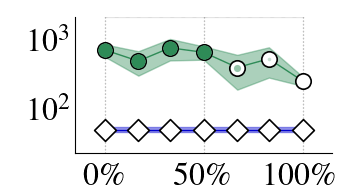}&
    \plot{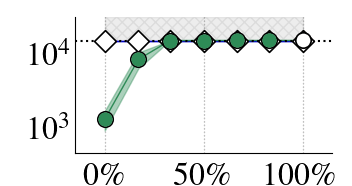}&
    \plot{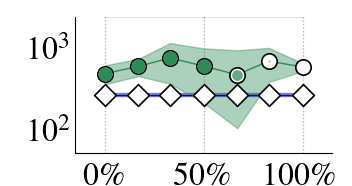}&
    \plot{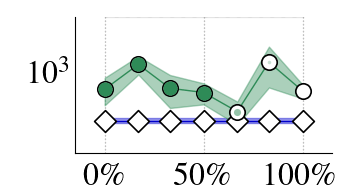}\\
    \rotatebox{90}{$1+\hat\epsilon$}&
    \plot{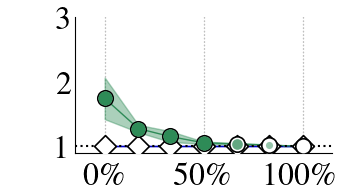}&
    \plot{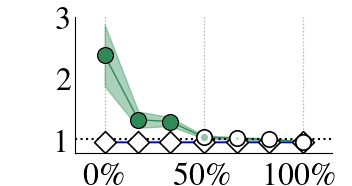}&
    \plot{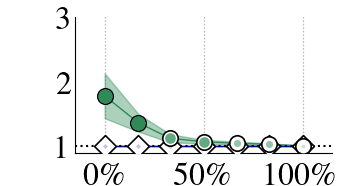}&
    \plot{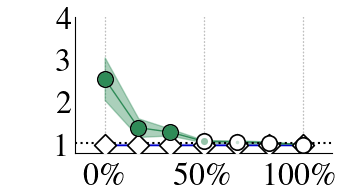}&
    \plot{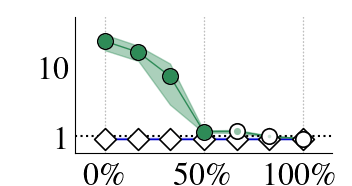}&
    \plot{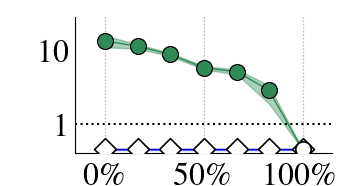}&
    \plot{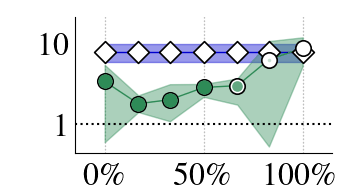}&
    \plot{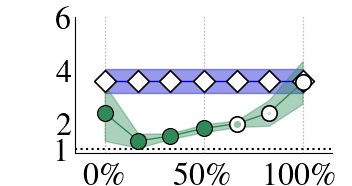}\\
    & \hspace{1.2em}\textbf{\% N. Summaries} & \hspace{1.2em}\textbf{\% N. Summaries}
    & \hspace{1.2em}\textbf{\% N. Summaries} & \hspace{1.2em}\textbf{\% N. Summaries}
    & \hspace{1.2em}\textbf{\% N. Summaries} & \hspace{1.2em}\textbf{\% N. Summaries}
    & \hspace{1.2em}\textbf{\% N. Summaries} & \hspace{1.2em}\textbf{\% N. Summaries}
\end{tabularx}}
    \caption{
    Effects of increasing number of summaries ($Z$) on the Portfolio workload,
    as a percentage of the number of scenarios,
    from 1 summary up to $M$ summaries ($100\%$).
    Increasing $Z$ improves the approximation ratio of the solution produced by \sss.
    Increasing $Z$ too far results in infeasible solutions as, when $Z=M$,
    \sss is identical to \naive,
    and it thus overfits, like \naive, to a bad set of scenarios.
    }
    \label{fig:n-summaries-scale}
\end{figure*}

\subsubsection{Response time to reach $100\%$ feasibility rate}
Both \naive and \sss increase $M$ (the number of scenarios) up to when solutions start to be feasible.
We report the cumulative time for all iterations the algorithm took to reach a certain feasibility rate,
from $0\%$, up to $100\%$ (when the algorithm produces feasible solutions for all $10$ runs).
For \sss, $Z$ is fixed ($1$ for Galaxy and Portfolio, $2$ for \tpch).
We set $Z$ to the lowest value (per workload) such that \sss could reach 100\% feasibility rate.
\Cref{fig:time-to-feasible} shows the results of the experiment.
For all (23) feasible queries across all workloads, \sss is always able to reach $100\%$ feasibility rate,
while \naive can only reach $100\%$ feasibility for only $7$ queries.
Even then, \sss is usually orders of magnitude faster than \naive
(e.g., Galaxy Q6, \tpch Q2, Q6, and Q7).
Moreover, in 15 out of the 23 feasible queries, \sss reached 100\% feasibility while \naive was still at $0\%$.
The conservative nature of summaries allows higher feasibility rates for \sss
even with fewer scenarios. As the number of scenarios increases,
\sss solves a much smaller problem than \naive, leading to orders-of-magnitude faster response time.

The only case where \sss is slower than \naive at reaching $100\%$ feasibility rate is Galaxy Q7,
which was an easy query for both methods:
both solved it with only 10 scenarios.
This query has a supported objective function over data with minimal uncertainty
described by a Pareto distribution with ``scale'' and ``shape'' both equal to $1$.
For this query, the summarization process and solving a probabilistically-unconstrained problem are overheads for \sss.
\tpch Q8 is an infeasible query.
Both methods increase $M$ up to $1000$ before declaring infeasibility,
but again \sss is faster than \naive in doing so.

\vspace{3pt}
\subsubsection{Effect of increasing the number of optimization scenarios}
We evaluate the scalability of our methods
when the number of optimization scenarios $M$ increases; $Z$ is fixed as described above.
For each algorithm, we group feasibility rates into 5 groups: 0\%, 25\%, 50\%, 75\% and 100\%,
and use different shadings to distinguish each case.

\Cref{fig:n-scenarios-scale} gives scalability results for the three workloads.
Generally, with low $M$, \naive executes very quickly to produce infeasible solutions with low objective values
(optimizer's curse); as \naive increases $M$, the running time increases exponentially---note the logarithmic scale---up to a point where it fails altogether (missing \naive points in the plots).
On the other hand, \sss finds feasible solutions even with as little as 10 scenarios.

\sss produces high  quality solutions
as demonstrated by the low approximation ratio ($1+\hat\epsilon$), close to $1$ for most queries.
However, with the hardest Portfolio queries (Q5 and Q6),
the worst approximation ratio for \sss is quite high for feasible solutions:
this is an indicator that the number of summaries, $Z=1$ is too low and should be increased.

\newcolumntype{Y}{>{\centering\arraybackslash}X}
\begin{figure*}
    \begin{mdframed}[linecolor=gray,innerleftmargin=10,innerrightmargin=10,innertopmargin=1,innerbottommargin=1]
    \centering{\scriptsize{\begin{tabularx}{\textwidth}{XcXcX}
    &
    \textbf{\naive (feasibility rate)}
    \includegraphics[width=5pt]{figs/plots/legend/mc_0_0.pdf} 0\%
    \includegraphics[width=5pt]{figs/plots/legend/mc_0_25.pdf} 25\%
    \includegraphics[width=5pt]{figs/plots/legend/mc_0_5.pdf} 50\%
    \includegraphics[width=5pt]{figs/plots/legend/mc_0_75.pdf} 75\%
    \includegraphics[width=5pt]{figs/plots/legend/mc_1_0.pdf} 100\%
    & &
    \textbf{\sss (feasibility rate)}
    \includegraphics[width=5pt]{figs/plots/legend/ss_0_0.pdf} 0\%
    \includegraphics[width=5pt]{figs/plots/legend/ss_0_25.pdf} 25\%
    \includegraphics[width=5pt]{figs/plots/legend/ss_0_5.pdf} 50\%
    \includegraphics[width=5pt]{figs/plots/legend/ss_0_75.pdf} 75\%
    \includegraphics[width=5pt]{figs/plots/legend/ss_1_0.pdf} 100\%
    &
    \end{tabularx}}}\end{mdframed}
    \vspace*{2pt}
\centering\scriptsize{
\begin{tabularx}{1\textwidth}{m{0.1cm}XXXXXXXX}
    &
    \hspace{1.3em}Q1 & \hspace{1.3em}Q2 & \hspace{1.3em}Q3 & \hspace{1.3em}Q4 &
    \hspace{1.3em}Q5 & \hspace{1.3em}Q6 & \hspace{1.3em}Q7 & \hspace{1.3em}Q8\\
    \rotatebox{90}{time (s)}&
    \plot{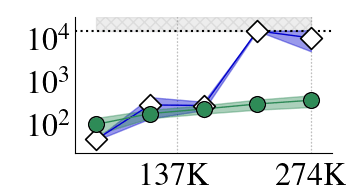}&
    \plot{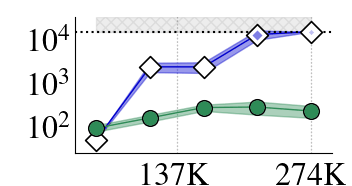}&
    \plot{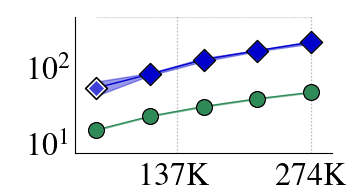}&
    \plot{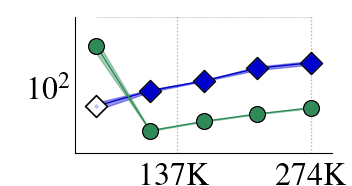}&
    \plot{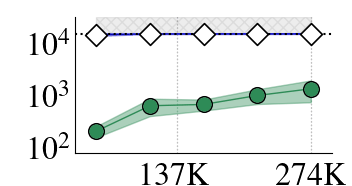}&
    \plot{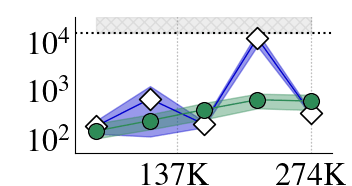}&
    \plot{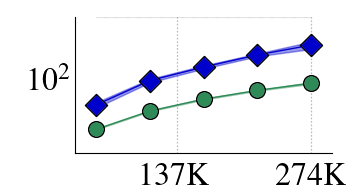}&
    \plot{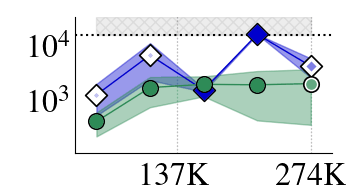}\\
    \rotatebox{90}{$1+\hat\epsilon$}&
    \plot{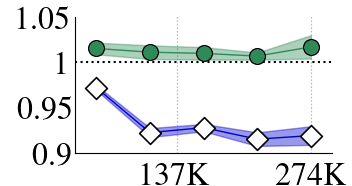}&
    \plot{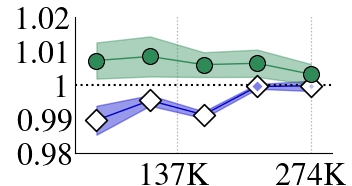}&
    \plot{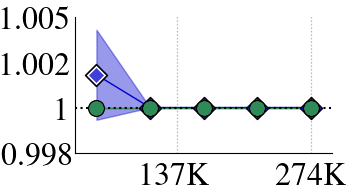}&
    \plot{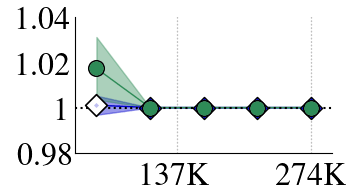}&
    \plot{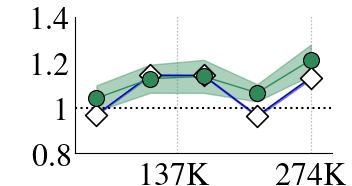}&
    \plot{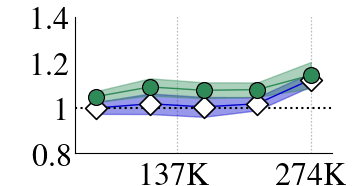}&
    \plot{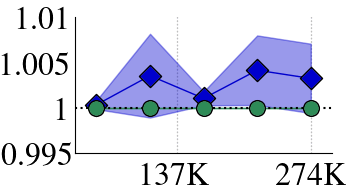}&
    \plot{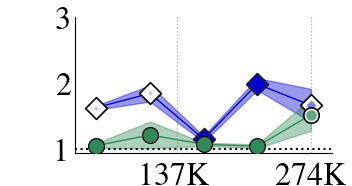}\\
    &
    \hspace{1.5em}\textbf{N. Tuples} &
    \hspace{1.5em}\textbf{N. Tuples} &
    \hspace{1.5em}\textbf{N. Tuples} &
    \hspace{1.5em}\textbf{N. Tuples} &
    \hspace{1.5em}\textbf{N. Tuples} &
    \hspace{1.5em}\textbf{N. Tuples} &
    \hspace{1.5em}\textbf{N. Tuples} &
    \hspace{1.5em}\textbf{N. Tuples}
\end{tabularx}}
    \vspace{-3mm}
    \caption{
    Scalability of \naive and \sss with increasing dataset size ($N$) on the Galaxy workload.
    The running times of both algorithms degrades with increasing $N$, but \sss scales up well in comparison with \naive.
    }
    \label{fig:data-size-scale}
\end{figure*}

\subsubsection{Effect of increasing the number of summaries} \label{subsec:increasing-summaries}
In this experiment, we show how increasing the number of summaries ($Z$) helps improve the approximation ratio
in the Portfolio queries.
We increase $Z$ from $1$ up to $M$ (number of scenarios),
where $M$ is set to where the feasibility rate of \sss was $100\%$ in the previous experiment,
and we show the running time and approximation ratio
compared to \naive with $M$ scenarios.
Figure~\ref{fig:n-summaries-scale} shows the results of this experiment.
First, the response time with increasing $Z$ is in most cases independent of $Z$.
In fact, while increasing $Z$ adds more scenarios to the \CSA formulation,
each summary becomes less and less conservative, making the problem a bit larger but always easier;
in the limit ($Z=M$), each summary is identical to an original scenario,
and thus \sss only pays the extra overhead, compared to \naive,
of solving the probabilistically-unconstrained problem first.
On the other hand, \naive is always faster, but its solutions are infeasible.
For most queries,
the approximation ratio closely approaches $1$,
while still maintaining a high feasibility rate.
Increasing $Z$ too far eventually causes
feasibility to drop,
reaching that of \naive in the limit ($Z=M$).

Finally, even though infeasible solutions tend to have better objective values than feasible ones,
we find that \naive's infeasible solutions to Q7 and Q8 have worse objective values.
These queries proved quite challenging for \naive as they involved stock price predictions for a week in the future.

\subsubsection{Effect of increasing the dataset size}
In this experiment, we increase the Galaxy dataset up to five times from 55,000 tuples to 274,000 tuples.
For all queries except Q8 we fix $M=56$ (for both algorithms) and $Z=1$.
In general, \sss scales well with increasing data set size: it finds feasible solutions
with good approximation ratios.
\naive, however, times out for several queries (Q1, Q2, Q5, Q6, \& Q8)
and its response time sharply increases as dataset size increases (Q1, Q2, Q6, Q8).
Except for three queries (Q3, Q4, Q7), most of \naive's solutions are infeasible;
even then, \sss produces feasible solutions in orders of magnitude less time with better approximation ratios.

In Q8, we set $M=562$ to enable \sss to still produce feasible solutions
(75\% feasibility at 274K tuples), without causing \naive to fail.
Q8 is a challenging query as each data value is sampled from a Pareto distribution with different parameters
leading to high variability across scenarios.

To further increase the data size scalability of \sss,
we hope to combine it with partitioning and divide-and-conquer approaches similar to \sr~\cite{Brucato2018}.

\section{Related Work} \label{sec:related-work}

\para{Probabilistic databases and package queries}
\emph{Probabilistic databases}~\cite{dalvi2007efficient,suciu2011probabilistic}
have focused mainly on modeling discrete data uncertainty;
the \emph{Monte Carlo Database} (MCDB)~\cite{jampani2008mcdb}
supports arbitrary uncertainty, via VG functions.
Probabilistic databases support \sql queries, but lack support for optimization.
\emph{Package query engines}~\cite{Brucato2018,vsikvsnys2016solvedb}
offer support only  for deterministic optimization.

\para{Stochastic optimization}
\emph{Stochastic optimization}~\cite{HOMEMDEMELLO201456} studies
approximations for stochastic constraints and objectives.
Probabilistic constraints are very hard to handle in general, because the feasible region of the inner constraint
may be non-convex~\cite{HOMEMDEMELLO201456,ahmed2008solving,calafiore2006probabilistic,CAMPI2009149,nemirovski2006scenario,dentcheva2006optimization,Luedtke2010}.
In this work, we study stochastic optimization problems with objective functions and constraints
defined in terms of linear functions of the tuple attributes.

Our \naive method
is derived from the numerous ``scenario approximations'' from the \stpr
literature~\cite{HOMEMDEMELLO201456,kall1994stochastic,calafiore2006probabilistic,CAMPI2009149,nemirovski2006scenario,luedtke2008sample,campi2011sampling,nemirovski2006convex}.
Choosing the number of scenarios ($M$) a priori is one of the most studied problems.
Campi et al.~\cite{CAMPI2009149} show that
the optimal solution of a Monte Carlo formulation that satisfies exactly $M$ i.i.d. scenarios
is feasible with probability at least $\delta$
if $M \ge \frac{2}{1-p_j}\left(\ln\left(\frac{1}{1-\delta}\right) + N\right)$.
A-priori bounds quickly become impractical in a database setting,
where $N$ is also the number of tuples, and thus typically large.
For example, with a table of size $N=50{,}000$, $p_j=0.9$, $\delta=0.95$, at least $M \ge 1{,}000{,}060$
scenarios must be generated and all satisfied.

\emph{Scenario removal} studies techniques for removing scenarios after
sampling~\cite{campi2011sampling,dupavcova2003scenario,karuppiah2010simple,luedtke2008sample,Calafiore2005}.
Empirically, these methods generally provide a reduction factor of only $50\%$ or less,
which is insufficient for our setting.
Our $\alpha$-summary can be viewed as removing $100 (1-\alpha) \%$ of the scenarios, where $\alpha$ is usually very small (below $0.01$);
not only do we remove scenarios, but we replace them with conservative summaries.

Similar to our setting, \emph{distributionally robust optimization}
(DRO)~\cite{hanasusanto2015distributionally,delage2010distributionally,lam2018sampling}
attempts to mitigate the optimizer's curse
when the uncertainty distribution is unknown but is assumed to lie in some set of candidate distributions;
the original probability constraints are replaced with worst-case probability constraints based on this set.
In contrast,  \sss uses deterministic worst-case constraints, which are simpler and 
avoid assumptions on the uncertainty distribution. DRO methods also show limited scalability in the number of variables $N$, e.g., 
$N$ is at most 20 in the experiments in~\cite{lam2018sampling}.

The goal of \emph{wait-and-judge optimization}~\cite{campi2018wait,campi2018general}
is to perform a-posteriori feasibility analysis.
Existing approaches help provide bounds on the quality of a solution,
but do not provide algorithms that dynamically adapt in response to poor solutions.
\sss, instead, adjusts the conservativeness of the summaries to obtain feasible solutions
with minimum computational cost.
\sss can potentially use wait-and-judge during out-of-sample validation to 
 decide when to stop increasing the number of scenarios.

\vspace*{-2px}
\section{Conclusion and Future Work} \label{sec:conclusion-and-future}

In this paper, we addressed \emph{single-stage} decision making under uncertainty,
in which decisions are made before the values of the random variables become known.
In many cases, however, uncertainty is revealed over time, in stages, allowing for remedial actions.
We plan to explore these dynamic settings, referred to as stochastic programming with recourse.
Another goal is to extend our methods to problems that involve probabilistic constraints where the inner constraints
must \emph{jointly} be satisfied with a given probability;
such an extension is highly nonntrivial.
We also plan to work on further algorithmic improvements, including (i)
developing more sophisticated summarization methods than minimum and maximum summaries;
(ii) scaling up \sss to very large datasets (e.g., millions of tuples) by combining
summaries
with divide-and-conquer approaches like \sr~\cite{Brucato2018}; (iii)
parallelizing \CSAsolve
and  summary generation;
and (iv) fully integrating stochastic package queries into a probabilistic database to handle multi-table queries.
We plan to further develop our theory on \sss to formally prove its convergence to feasible solutions
as the number of scenarios increases.
Finally, we plan to explore ways to ``open the black box'' of optimization software
to allow for further performance improvements,
in analogy to the way MCDB re-engineered query operations to efficiently handle uncertain tuple attributes.

\vspace*{-2px}
\para{Acknowledgments}
This work was supported by the NYUAD Center for Interacting Urban Networks (CITIES),
and funded by:
Tamkeen under the NYUAD Research Institute Award CG001,
the Swiss Re Institute under the Quantum Cities initiative, and
the National Science Foundation under grants IIS-1453543 and IIS-1943971.
The authors would like to thank the anonymous reviewers for their valuable insights,
and Arya Mazumdar, Nishad Ranade, and Senay Solak for their help and suggestions
during various phases of the work.

\bibliographystyle{abbrv}
\bibliography{refs}

\vfill
\appendix

\section{\spaql: Language Support} \label{sec:spaql}
In this appendix, we provide additional details about our language \spaql.
Figure~\ref{fig:spaql-grammar} shows the syntax diagram for \spaql, constructed using
the Railroad Diagram Generator~\cite{railroad}.
\spaql extends \paql~\cite{Brucato2018}
to support expectation and probabilistic constraints and objectives in the following ways.
In \paql, a linear constraint has the general form:

\begin{sqlquerylg}
\> (\SELECT \SUM{$f(\rel{R})$} \WHERE <selection-predicate> \FROM $\p$) $\ge v$
\end{sqlquerylg}
where $\p$ is a reference name (alias) to the result package,
$f(\rel{R})$ a function of the attributes of $\rel{R}$
(e.g., $f(\rel{R}) = 3\attr{A}^2_1 - 2\sqrt{\attr{A}_2} + 1$ for a table $\rel{R}$ with
two attributes $\attr{A}_1$ and $\attr{A}_2$),
and $v \in \reals$.
Syntactic sugar for a simple single-attribute, no-selection constraint is
$\SUM{\attr{A}} \ge v$, where $f(\rel{R}) = \attr{A}$, for some attribute $\attr{A}$.
For example, $\SUM{\dattr{price}} \le 1000$ from the query in the introduction
is a single-attribute summation constraint on $\dattr{price}$.
A cardinality constraint is a special case of a summation constraint,
$\COUNT{\ast} = \SUM{1}$, where $f(\rel{R}) = 1$.

If any of the attributes in $f(\rel{R})$ are stochastic, \spaql allows users to write either
an expected or a probabilistic version of the constraint.
An expected constraint simply prepends the keyword $\EXPECTED{}$ to a deterministic constraint, e.g.,
$\EXPECTED{\SUM{\attr{A}}} \ge v$.
Similarly, an expected minimization objective can be expressed as
$\MINIMIZE\; \EXPECTED{\SUM{\attr{A}}}$.
For example, the objective function of query $\query$ from the introduction
maximizes the $\EXPECTED{\SUM{\attr{Gain}}}$.

A probabilistic constraint can be expressed by appending $\WITHPROBABILITY{\ge p}$ to a deterministic constraint,
for some $p \in (0,1)$.
For example, $\SUM{\attr{Gain}} \ge -10$ $\WITHPROBABILITY{\ge 0.95}$.
The language also allows for opposite constraints ($\le p$) for convenience,
but, as noted earlier, they can always be equivalently rewritten in the other form.
A probabilistic objective is expressed by prepending $\PROBABILITYOF{}$ to a constraint,
e.g., $\MAXIMIZE\; \PROBABILITYOF{\SUM{\attr{Gain}} \ge -10}$.

\begin{figure*}
\begin{tabular}{lll}
    \multicolumn{3}{c}{
        \makecell[tl]{
            \textbf{PackageQuery}:\\
            \includegraphics[scale=0.58]{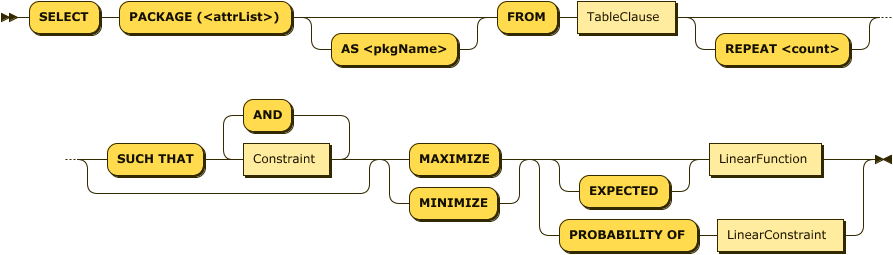}
        }
    }\\
     &\;\;\;& \\
    \makecell[tl]{
        \textbf{TableClause}:\\
        \includegraphics[scale=0.58]{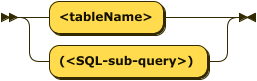}
    } & &
    \makecell[tl]{
        \textbf{LinearFunction}:\\
        \includegraphics[scale=0.58]{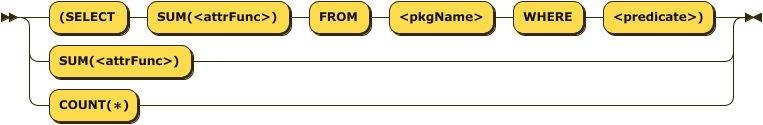}
    }\\
     &\;\;\;& \\
    \makecell[tl]{
        \textbf{LinearConstraint}:\\
        \includegraphics[scale=0.58]{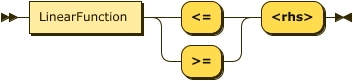}
    } & &
    \makecell[tl]{
        \textbf{Constraint}:\\
        \includegraphics[scale=0.58]{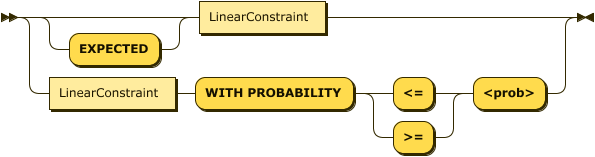}
    }\\
\end{tabular}
\caption{Syntax (railroad) diagram of \spaql.}
\label{fig:spaql-grammar}
\end{figure*}

\section{Approximation Guarantees} \label{sec:optimality}
\setlength{\tabcolsep}{8pt}

In this section, we provide more details about
our theoretical approximation guarantees (\Cref{subsec:early-termination}),
including proofs that were omitted in the main body of the paper,
and all the remaining cases that were skipped for space constraints.

\subsection{Objective types and signs}\label{subsec:objective-types-and-signs}

Recall that in \Cref{th:unsupp-approx-mc} we assumed
a minimization query with nonnegative objective values.
The following propositions replace \Cref{th:unsupp-approx-mc}
under different conditions.

\para{Minimization with negative objective values}
\begin{proposition} \label{th:min-neg}
    Let $\epsilon \ge 0$ and let $\underline{\omega}$
    be a negative constant such that $\underline{\omega} \le \hat\omega$.
    Set $\epsilon^{(q)}=(\underline{\omega}/\omega^{(q)})-1$.
    If $\epsilon^{(q)} \le \epsilon$,
    then $\hat\omega \ge {(1 + \epsilon)} \omega^{(q)}$.
\end{proposition}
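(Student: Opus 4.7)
The plan is to adapt the short algebraic chain from \Cref{th:unsupp-approx-mc} while carefully tracking the sign flips that arise because all of the relevant objective values are now negative. First I would verify that the ratio $\underline{\omega}/\omega^{(q)}$ is well defined and at least $1$: in the intended setting $x^{(q)}$ is validation-feasible, so by optimality of $\hat{x}$ for $\SAA_{\query,\hat{M}}$ we have $\hat\omega \le \omega^{(q)}$; combined with the hypothesis $\underline{\omega} \le \hat\omega < 0$, this gives $\underline{\omega} \le \omega^{(q)} < 0$, and as a ratio of two negative numbers $\underline{\omega}/\omega^{(q)} \ge 1$, so $\epsilon^{(q)} \ge 0$.

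Next I would rewrite the definition $\epsilon^{(q)} = \underline{\omega}/\omega^{(q)} - 1$ as $\underline{\omega} = (1+\epsilon^{(q)})\omega^{(q)}$. The hypothesis $\epsilon^{(q)} \le \epsilon$ then gives $1+\epsilon \ge 1+\epsilon^{(q)} > 0$, and multiplying this inequality by the negative quantity $\omega^{(q)}$ flips its direction, yielding $(1+\epsilon)\omega^{(q)} \le (1+\epsilon^{(q)})\omega^{(q)} = \underline{\omega}$. Chaining with the given bound $\underline{\omega} \le \hat\omega$ produces $(1+\epsilon)\omega^{(q)} \le \hat\omega$, which is exactly the desired inequality $\hat\omega \ge (1+\epsilon)\omega^{(q)}$.

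The only real subtlety---and the main obstacle---is bookkeeping the inequality directions introduced by multiplying by a negative quantity, and sanity-checking that the ratio-based bound still captures the intuitive notion of a $(1+\epsilon)$-approximation when the objective is negative: a good solution is one whose magnitude is close to that of the optimum, so the reported bound is most naturally written with $\omega^{(q)}$ on the right and $\hat\omega$ on the left, which is the flipped form compared to \Cref{th:unsupp-approx-mc}. Once the sign conventions are fixed, no new conceptual ingredients beyond those in the proof of \Cref{th:unsupp-approx-mc} are required, and the derivation collapses to the two-line manipulation above.
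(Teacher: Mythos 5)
Your proof is correct and follows essentially the same route as the paper's: both rewrite $\underline{\omega}=(1+\epsilon^{(q)})\omega^{(q)}$ and chain this with $\underline{\omega}\le\hat\omega$ and the sign-flipped comparison $(1+\epsilon)\omega^{(q)}\le(1+\epsilon^{(q)})\omega^{(q)}$. Your additional remark making explicit that $\omega^{(q)}<0$ (via feasibility of $x^{(q)}$ and $\underline{\omega}\le\hat\omega<0$) is a useful clarification of a step the paper leaves implicit, but it does not change the argument.
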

\begin{proof}
    Suppose that $\epsilon^{(q)} \le \epsilon$.
    We have
    \[
        \hat\omega
        \ge \underline{\omega}
        =   \biggl(1+\Bigl(\frac{\underline{\omega}}{\omega^{(q)}}-1\Bigr)\biggr)\omega^{(q)}
        =   \bigl(1+\epsilon^{(q)}\bigr)\omega^{(q)}
        \ge (1+\epsilon)\omega^{(q)},
    \]
    and the result follows.
\end{proof}

\para{Maximization with nonnegative objective values}
\begin{proposition} \label{th:max-nonneg}
    Let $\epsilon \ge 0$ and let $\overline{\omega}$
    be a positive constant such that $\hat\omega \le \overline{\omega}$.
    Set $\epsilon^{(q)}=(\overline{\omega}/\omega^{(q)})-1$.
    If $\epsilon^{(q)} \le \epsilon$,
    then $\hat\omega \le {(1 + \epsilon)} \omega^{(q)}$.
\end{proposition}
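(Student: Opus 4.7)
The plan is to mirror exactly the algebraic manipulation used in the proof of \Cref{th:min-neg} (and \Cref{th:unsupp-approx-mc}), but with the roles of the unknown optimum and the computed solution reversed, because for a maximization problem our computed value $\omega^{(q)}$ is bounded above by the true optimum $\hat\omega$, and $\hat\omega$ is in turn bounded above by the known constant $\overline{\omega}$. So, rather than wedging $\omega^{(q)}$ between $\hat\omega$ and a lower bound $\underline\omega$ as in the minimization case, I wedge $\hat\omega$ between $\omega^{(q)}$ and an upper bound $\overline\omega$.

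The key steps, in order, will be: (i)~assume the hypothesis $\epsilon^{(q)}\le\epsilon$; (ii)~rewrite $\overline\omega$ in terms of $\omega^{(q)}$ using the defining identity $\overline\omega=(1+\epsilon^{(q)})\omega^{(q)}$, which follows directly from $\epsilon^{(q)}=(\overline\omega/\omega^{(q)})-1$; (iii)~chain the inequalities $\hat\omega\le\overline\omega=(1+\epsilon^{(q)})\omega^{(q)}\le(1+\epsilon)\omega^{(q)}$, where the final step uses the assumed bound on $\epsilon^{(q)}$ together with the sign assumption $\omega^{(q)}>0$ (which gives monotonicity of multiplication by $\omega^{(q)}$). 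Concretely, I expect the core display to read
\[
\hat\omega \le \overline\omega
= \biggl(1+\Bigl(\frac{\overline\omega}{\omega^{(q)}}-1\Bigr)\biggr)\omega^{(q)}
= \bigl(1+\epsilon^{(q)}\bigr)\omega^{(q)}
\le (1+\epsilon)\omega^{(q)},
\]
which is a direct transcription of the corresponding display in \Cref{th:min-neg}.

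The main obstacle is really just a bookkeeping subtlety rather than a mathematical one: the definition $\epsilon^{(q)}=(\overline\omega/\omega^{(q)})-1$ presupposes $\omega^{(q)}>0$, and monotonicity of the final inequality likewise requires $\omega^{(q)}>0$. The proposition statement only asserts that objective values are \emph{nonnegative}, so I should either note explicitly that the interesting case is $\omega^{(q)}>0$ (the degenerate case $\omega^{(q)}=0$ forces $\hat\omega=0$ by the bound $\omega^{(q)}\le\hat\omega\le\overline\omega$ only if $\overline\omega=0$, otherwise the ratio blows up and $\epsilon^{(q)}\le\epsilon$ cannot hold for any finite $\epsilon$), or implicitly carry the convention, consistent with \Cref{th:unsupp-approx-mc}, that $\overline\omega$ and $\omega^{(q)}$ are positive. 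Beyond this, the proof contains no real content past the one-line algebraic chain above, so I will keep it short and simply cross-reference \Cref{th:unsupp-approx-mc} for its dual structure.
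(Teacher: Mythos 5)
Your proof is correct and is essentially identical to the paper's own argument: the same one-line chain $\hat\omega \le \overline\omega = (1+\epsilon^{(q)})\omega^{(q)} \le (1+\epsilon)\omega^{(q)}$ appears verbatim in the paper. Your extra remark about needing $\omega^{(q)}>0$ for the last inequality is a fair observation that the paper leaves implicit, but it does not change the approach.
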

\begin{proof}
    Suppose that $\epsilon^{(q)} \le \epsilon$.
    We have
    \[
        \hat\omega
        \le \overline{\omega}
        =   \biggl(1+\Bigl(\frac{\overline{\omega}}{\omega^{(q)}}-1\Bigr)\biggr)\omega^{(q)}
        =   \bigl(1+\epsilon^{(q)}\bigr)\omega^{(q)}
        \le (1+\epsilon)\omega^{(q)},
    \]
    and the result follows.
\end{proof}

\para{Maximization with negative objective values}
\begin{proposition} \label{th:max-neg}
    Let $\epsilon \ge 0$ and let $\overline{\omega}$
    be a negative constant such that $\hat\omega \le \overline{\omega}$.
    Set $\epsilon^{(q)}=(\omega^{(q)} / \overline{\omega})-1$.
    If $\epsilon^{(q)} \le \epsilon$,
    then $\omega^{(q)} \ge {(1 + \epsilon)}\hat\omega$.
\end{proposition}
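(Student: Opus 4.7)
The plan is to mirror the structure of the proofs of \Cref{th:unsupp-approx-mc}, \Cref{th:min-neg}, and \Cref{th:max-nonneg}: assume $\epsilon^{(q)} \le \epsilon$, use the definition of $\epsilon^{(q)}$ to write $\omega^{(q)}$ as a product involving $\overline\omega$, and then chain inequalities using the hypothesis $\hat\omega \le \overline\omega$. The only real new wrinkle compared with \Cref{th:max-nonneg} is that every quantity involved ($\hat\omega$, $\overline\omega$, $\omega^{(q)}$) is negative, so the direction of inequalities must be tracked carefully when multiplying or dividing.

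First, I would observe that the definition $\epsilon^{(q)} = (\omega^{(q)}/\overline\omega) - 1$ rearranges to $\omega^{(q)} = (1+\epsilon^{(q)})\overline\omega$. Next, since we are in a maximization setting and $\hat\omega$ is the optimum value, $\omega^{(q)} \le \hat\omega \le \overline\omega < 0$, which (dividing by the negative number $\overline\omega$) yields $\omega^{(q)}/\overline\omega \ge 1$, so $\epsilon^{(q)} \ge 0$ and in particular the factor $(1+\epsilon^{(q)})$ is positive.

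Now the key chain: multiplying $\hat\omega \le \overline\omega$ by the positive quantity $(1+\epsilon^{(q)})$ preserves the inequality, giving
\[
(1+\epsilon^{(q)})\hat\omega \le (1+\epsilon^{(q)})\overline\omega = \omega^{(q)}.
\]
To replace $\epsilon^{(q)}$ with $\epsilon$, I would use that $\hat\omega < 0$ and $\epsilon^{(q)}\le\epsilon$: multiplying the negative number $\hat\omega$ by the larger positive factor $(1+\epsilon)$ produces a smaller (i.e.\ more negative) result, so $(1+\epsilon)\hat\omega \le (1+\epsilon^{(q)})\hat\omega$. Chaining the two bounds yields $(1+\epsilon)\hat\omega \le \omega^{(q)}$, which is exactly the desired conclusion $\omega^{(q)} \ge (1+\epsilon)\hat\omega$.

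The main obstacle, such as it is, is purely bookkeeping: keeping straight which direction each inequality flips when multiplying or dividing by negatives, and verifying that $(1+\epsilon^{(q)})$ really is a positive multiplier (which needs the preliminary observation that $\epsilon^{(q)}\ge 0$ under the hypotheses). Once those sign checks are in place, the proof is a two-line calculation essentially identical in shape to the proofs already given for the other three sign/direction combinations.
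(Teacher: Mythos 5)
Your proof is correct and follows essentially the same route as the paper's: both establish $\omega^{(q)} \ge (1+\epsilon^{(q)})\hat\omega$ from $\hat\omega \le \overline\omega$ together with sign considerations (the paper multiplies $\omega^{(q)}<0$ by $\hat\omega/\overline\omega \ge 1$, you multiply $\hat\omega \le \overline\omega$ by the positive factor $1+\epsilon^{(q)} = \omega^{(q)}/\overline\omega$), and then both conclude with the same final step $(1+\epsilon^{(q)})\hat\omega \ge (1+\epsilon)\hat\omega$ using $\hat\omega < 0$. Your sign bookkeeping is sound; the only cosmetic difference is that you justify $1+\epsilon^{(q)}>0$ via the optimality of $\hat\omega$, whereas the standing assumption $\omega^{(q)}<0$ (this is the ``negative objective values'' case) together with $\overline\omega<0$ already suffices.
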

\begin{proof}
    Suppose that $\epsilon^{(q)} \le \epsilon$.
    Since $\hat\omega/\overline{\omega} \ge 1$ and $\omega^{(q)} < 0$, we have
    \[
        \omega^{(q)}
        \ge \Bigl(\frac{\hat\omega}{\overline{\omega}}\Bigr) \omega^{(q)}
        =   \biggl(1+\Bigl(\frac{\omega^{(q)}}{\overline{\omega}}-1\Bigr)\biggr)\hat\omega
        =   \bigl(1+\epsilon^{(q)}\bigr)\hat\omega
        \ge (1+\epsilon)\hat\omega,
    \]
    and the result follows.
\end{proof}

\subsection{Upper and lower bounds on $\hat\omega$}\label{subsec:upper-and-lower-bounds-tonull}
Our theory uses upper and lower bounds on the optimal objective value $\hat\omega$
to derive approximation bounds.
We provide bounds under the following assumptions:
(A1)~there exist bounds on the values of the validation scenarios,
(A2)~there exist package size bounds.
These two assumptions are not too restrictive since we can almost always find such bounds
by analyzing the query, or the validation scenarios produced by the VG functions.
The following are examples of simple bounds for (A1) and (A2):

\para{(A1) Validation scenarios bounds}
We assume the availability of upper and lower bounds on the values of
the validation scenarios across the tuples in the optimal package.
That is, there should exist $\underline{s}$ and $\overline{s}$
such that ${\underline{s} \le \hat{s}_{ij}.\attr{A} \le \overline{s}}$,
for all $i \in [1..N] : \hat{x}_i > 0$, and $j \in [1..\hat{M}]$.
We can easily derive (possibly loose) bounds, by taking the minimum and maximum
scenario values across all input tuples,
i.e., by setting
$\underline{s} \coloneqq \min\{ \hat{s}_{ij}.\attr{A} \mid i \in [1..N], j \in [1..\hat{M}]\}$, and
$\overline{s} \coloneqq \max\{ \hat{s}_{ij}.\attr{A} \mid i \in [1..N], j \in [1..\hat{M}]\}$.
In principle, tighter bounds might exist.
For example, if we could identify tuples that cannot be part of the optimal solution,
we could take them out of the $\min$ and $\max$ in the above formulas.
In this work, we do not explore ways to derive better bounds.

\para{(A2) Package size bounds}
We also assume there exist upper and lower bounds on the size of
the optimal package.
That is, there exist $\underline{l}$ and $\overline{l}$ such that
$\underline{l} \le \sum_{i=1}^{N} \hat{x}_i \le \overline{l}$.
An obvious value for $\underline{l}$, always true, is $\underline{l} = 0$.
If the package query includes a cardinality constraint (i.e., a constraint on the $\COUNT{\ast}$),
this might be used directly to derive $\underline{l}$, $\overline{l}$, or both.
If the query includes deterministic summation constraints
(i.e., on $\SUM{\attr{A}}$, for a deterministic attribute $\attr{A}$),
we can derive bounds following the derivations presented in~\cite{brucato2014packagebuilder}.
Again, in this work we do not study ways to derive tighter bounds than the obvious ones.

\smallskip
We provide two types of bounds on $\hat\omega$:
(B1)~constraint-agnostic bounds, which are always available regardless of the probabilistic constraints;
(B2)~constraint-specific bounds, which depend on the probabilistic constraint and on whether the constraint
supports or counteracts the objective function, or it is independent of it (see~\Cref{def:supportiveness}).
In cases where both (B1) and (B2) are available, the final bound is the best of the two.

\para{(B1) Constraint-agnostic bounds}
In \Cref{tab:generic-bounds}, we show bounds on the optimal objective value of the form
$\underline\omega \le \hat\omega \le \overline\omega$.

\subsubsection*{Proofs}
The proof for $\underline\omega$ for case $\underline{s} \ge 0$ was provided in \Cref{subsec:early-termination}.
Similar derivations can be used for $\underline\omega$ under $\underline{s} < 0$ and for $\overline\omega$.
For example, for $\underline\omega$ under $\underline{s} < 0$, we have:
\[
    \hat\omega =\frac{1}{\hat M}\sum_{j=1}^{\hat M}\sum_{i=1}^N\hat{s}_{ij}.\attr{A}\;\hat{x}_i
    \ge \frac{1}{\hat M}\sum_{j=1}^{\hat M}\underline{s}\overline{l}
    = \underline{s}\overline{l},
\]
using the fact that $\sum_{i=1}^N \hat{x}_i \le \overline{l}$.
The other derivations follow a similar reasoning.

\para{(B2) Constraint-specific bounds}
Another class of bounds exist for an objective function that is supported or counteracted by
at least one probabilistic constraint of the form
$\probbg{\sum_{i=1}^{N} \xi_i x_i \odot v} \ge p$.
We first set the followings:
\begin{align*}
\hat{S}^\odot_{\hat{x}} &\coloneqq \{ j \in [1..\hat{M}] \mid \textstyle\sum_{i=1}^{N} \hat{s}_{ij}.\attr{A}\;\hat{x}_i \odot v \},\\
\hat{S}^\otimes_{\hat{x}} &\coloneqq [1..\hat{M}] \setminus \hat{S}^\odot_{\hat{x}},\\
\hat\omega^\odot &\coloneqq \frac{1}{\hat M} \sum_{j \in \hat{S}^\odot_{\hat{x}}} \sum_{i=1}^{N} \hat{s}_{ij}.\attr{A}\;\hat{x}_i,\\
\hat\omega^\otimes &\coloneqq \frac{1}{\hat M} \sum_{j \in \hat{S}^\otimes_{\hat{x}}} \sum_{i=1}^{N} \hat{s}_{ij}.\attr{A}\;\hat{x}_i.
\end{align*}
Intuitively, $\hat{S}^\odot_{\hat{x}}$ is the set of validation scenarios satisfied by the optimal solution $\hat{x}$,
and $\hat{S}^\otimes_{\hat{x}}$ is the set of validation scenarios \emph{not satisfied} by $\hat{x}$.
Notice that the optimal value is $\hat\omega = \hat\omega^\odot + \hat\omega^\otimes$.

We provide bounds in the form
$\underline\omega^\odot + \underline\omega^\otimes \le \hat\omega \le
     \overline\omega^\odot + \overline\omega^\otimes$.
They are implied by the following conditions:
\begin{align*}
\text{(C1)~} \underline\omega^\odot \le \hat\omega^\odot \text{ \;\;(C2)~ } \underline\omega^\otimes \le \hat\omega^\otimes,\\
\text{(C3)~} \hat\omega^\odot \le \overline\omega^\odot \text{ \;\;(C4)~ } \hat\omega^\otimes \le \overline\omega^\otimes.
\end{align*}

\Cref{tab:dependent-objective-bounds} shows all the available bounds for (C1-4) under different cases.
Recall, from \Cref{subsec:early-termination}, that our algorithm
uses the best available bounds (i.e., maximum lower bound, and minimum upper bound)
among all the available ones,
including the ones in \Cref{tab:generic-bounds}.

\begingroup
\renewcommand{\arraystretch}{1.2}
\begin{table}
\centering
    \begin{tabular}{l|ll}
        \toprule

        \textbf{Case} &
        $\underline\omega$ &
        $\overline\omega$ \\

        \midrule

        $\underline{s} \ge 0$ \;\;or\;\; $\overline{s} \ge 0$ &
        $\underline{s}\underline{l}$ &
        $\overline{s}\overline{l}$ \\

        $\underline{s} < 0$ \;\;or\;\; $\overline{s} < 0$ &
        $\underline{s}\overline{l}$ &
        $\overline{s}\underline{l}$ \\

        \bottomrule
    \end{tabular}
\caption{
    Constraint-agnostic bounds on the optimal objective value
    ($\underline{\omega} \le \hat\omega \le \overline{\omega}$)
    that lead to $(1+\epsilon)$-approximations.
    These bounds are defined over existing bounds on the validation scenarios and package size:
    $\underline{s} \le \hat{s}_{ij} \le \overline{s}$, for all $i \in [1..N]$ and $j \in [1..\hat{M}]$,
    and $0 \le \underline{l} \le \sum_{i=1}^{N} \hat{x}_i \le \overline{l}$.
}
\label{tab:generic-bounds}
\end{table}
\endgroup

\begingroup
\renewcommand{\arraystretch}{1.1}
\begin{table*}
\centering
    \begin{tabular}{cc|ll|ll}
        \toprule

        \textbf{Objective-Constraint Interaction} &
        \textbf{Case} &
        $\underline\omega^\odot$ &
        $\underline\omega^\otimes$ &
        $\overline\omega^\odot$ &
        $\overline\omega^\otimes$ \\

        \midrule

        \multirow{2}{*}{(a) Independent} &
        $\underline{s} \ge 0$ \;\;or\;\; $\overline{s} \ge 0$ &
        $p\underline{s}\underline{l}$ &
        $0$ &
        $\overline{s}\overline{l}$ &
        $(1-p)\overline{s}\overline{l}$ \\

        &
        $\underline{s} < 0$ \;\;or\;\; $\overline{s} < 0$ &
        $\underline{s}\overline{l}$ &
        $(1-p)\underline{s}\overline{l}$ &
        $p\overline{s}\underline{l}$ &
        $0$ \\

        \midrule

        \multirow{4}{*}{(b) Supporting/counteracting} &
        $\sum_{i=1}^{N}\xi_i x_i \ge v \ge 0$ &
        $pv$ &
        $0$ &
        --- &
        --- \\

        &
        $\sum_{i=1}^{N}\xi_i x_i \ge v, v < 0$ &
        $v$ &
        $(1-p)v$ &
        --- &
        --- \\

        &
        $\sum_{i=1}^{N}\xi_i x_i \le v, v \ge 0$ &
        --- &
        --- &
        $v$ &
        $(1-p)v$ \\

        &
        $\sum_{i=1}^{N}\xi_i x_i \le v < 0$ &
        --- &
        --- &
        $pv$ &
        $0$ \\

        \bottomrule
    \end{tabular}
\caption{
    Constraint-specific bounds for an objective with inner function $\sum_{i=1}^{N} \xi_i x_i$
    subject to a probabilistic constraint with right-hand side $p$.
    For group (a) in this table, the probabilistic constraint is independent of the objective function;
    for group (b), the constraint supports or counteracts the objective:
    $\probs{\sum_{i=1}^{N} \xi_i x_i \odot v} \ge p$.
    The final bounds on the optimal objective value are of the form
    $\underline\omega^\odot + \underline\omega^\otimes \le \hat\omega \le
    \overline\omega^\odot + \overline\omega^\otimes$.
    An entry with --- means that no bound exists for that case.
    The final bound is the best bound from this table where any of the cases are true.
    For example, for $\underline\omega^\odot$, the final bound is the \emph{maximum} of $p\underline{s}\underline{l}$
    (or $\underline{s}\overline{l}$) and $pv$ (or $v$).
    Notice how there is always at least one available bound from group (a),
    and possibly one additional bound from group (b),
    so that at least one bound for each column of the table always exists.
}
\label{tab:dependent-objective-bounds}
\end{table*}
\endgroup

\subsubsection*{Proofs}
Consider a
counteracting probabilistic constraint
$\probs{\sum_{i=1}^{N} \xi_i x_i \odot v} \ge p$.
We prove that
$\hat\omega \ge \underline\omega^\odot + \underline\omega^\otimes$
under condition ${v \ge 0}$,
which was mentioned in \Cref{subsec:early-termination} for a minimization objective
with a counteracting constraint.
Following the table for case $v \ge 0$, we have to prove $\hat\omega \ge pv$:
\begin{align*}
    \hat\omega
    &= \hat\omega^\odot + \hat\omega^\otimes
    = \frac{1}{\hat M} \sum_{j \in \hat{S}^\odot_{\hat{x}}} \sum_{i=1}^{N} \hat{s}_{ij}.\attr{A}\;\hat{x}_i
    + \frac{1}{\hat M} \sum_{j \in \hat{S}^\otimes_{\hat{x}}} \sum_{i=1}^{N} \hat{s}_{ij}.\attr{A}\;\hat{x}_i \\
    &\ge \frac{1}{\hat M} \sum_{j \in \hat{S}^\odot_{\hat{x}}} v
    = \frac{1}{\hat M} |\hat{S}^\odot_{\hat{x}}| v
    \ge pv,
\end{align*}
where we used the facts that
$|\hat{S}^\otimes_{\hat{x}}| \ge 0$ and
$|\hat{S}^\odot_{\hat{x}}| \ge p\hat{M}$.
We now prove that
$\hat\omega \le \overline\omega^\odot + \overline\omega^\otimes$
under conditions $\overline{s} \ge 0$ and $v \ge 0$,
which was also mentioned in \Cref{subsec:early-termination}
for a minimization objective with supporting constraint.
Following the table, one possible such bound under these conditions
is ${\hat\omega \le v + (1-p)\overline{s}\overline{l}}$, which follows from:
\begin{align*}
    \hat\omega
    &= \hat\omega^\odot + \hat\omega^\otimes
    = \frac{1}{\hat M} \sum_{j \in \hat{S}^\odot_{\hat{x}}} \sum_{i=1}^{N} \hat{s}_{ij}.\attr{A}\;\hat{x}_i
    + \frac{1}{\hat M} \sum_{j \in \hat{S}^\otimes_{\hat{x}}} \sum_{i=1}^{N} \hat{s}_{ij}.\attr{A}\;\hat{x}_i \\
    &\le \frac{1}{\hat M} \sum_{j \in \hat{S}^\odot_{\hat{x}}} v
    + \frac{1}{\hat M} \sum_{j \in \hat{S}^\otimes_{\hat{x}}} \sum_{i=1}^{N} \overline{s}\;\hat{x}_i
    \le \frac{1}{\hat M} |\hat{S}^\odot_{\hat{x}}| v
    + \frac{1}{\hat M} |\hat{S}^\otimes_{\hat{x}}| \overline{s}\overline{l} \\
    &\le v + (1-p)\overline{s}\overline{l},
\end{align*}
where we used the facts that
$|\hat{S}^\odot_{\hat{x}}| \le \hat{M}$ and
$|\hat{S}^\otimes_{\hat{x}}| \le (1-p)\hat{M}$.
All other bounds in the table are easily derivable following a similar approach.

\section{Workload Details} \label{sec:workload}

\begin{figure}[t]
 \footnotesize
 \begin{tabular}{l}
  \textbf{\small Galaxy query template (counteracted objective)} \\
  \SELECT~\PACKAGE{\ast} \FROM~Galaxy \SUCHTHAT \\
  \;\;\;$\COUNT{\ast}$ \BETWEEN $5$ \AND $10$ \AND \\
  \;\;\;$\SUM{\text{Petromag\_r}} \ge \textbf{\{v\}}$ $\WITHPROBABILITY{\ge \textbf{\{p\}}}$ \\
  \MINIMIZE $\EXPECTED{\SUM{\text{Petromag\_r}}}$ \\
  \\
  \textbf{\small Galaxy query template (supported objective)} \\
  \SELECT~\PACKAGE{\ast} \FROM~Galaxy \SUCHTHAT \\
  \;\;\;$\COUNT{\ast}$ \BETWEEN $5$ \AND $10$ \AND \\
  \;\;\;$\SUM{\text{Petromag\_r}} \le \textbf{\{v\}}$ $\WITHPROBABILITY{\ge \textbf{\{p\}}}$ \\
  \MINIMIZE $\EXPECTED{\SUM{\text{Petromag\_r}}}$ \\
  \\
  \textbf{\small Portfolio query template (supported objective)} \\
  \SELECT~\PACKAGE{\ast} \FROM~Stock\_Investments \SUCHTHAT \\
  \;\;\;$\SUM{\text{price}} \le 1000$ \AND \\
  \;\;\;$\SUM{\text{Gain}} \ge \textbf{\{v\}}~\WITHPROBABILITY{\ge \textbf{\{p\}}}$ \\
  \MAXIMIZE $\EXPECTED{\SUM{\text{Gain}}}$ \\
  \\
  \textbf{\small \tpch query template (independent objective)} \\
  \SELECT~\PACKAGE{\ast} \FROM~Tpch\_\textbf{\{D\}} \SUCHTHAT \\
  \;\;\;$\COUNT{\ast}$ \BETWEEN $1$ \AND $10$ \AND \\
  \;\;\;$\SUM{\text{Quantity}} \le \textbf{\{v\}}~\WITHPROBABILITY{\ge \textbf{\{p\}}}$ \\
  \MAXIMIZE $\PROBABILITYOF{\SUM{\text{Revenue}} \ge 1000}$
 \end{tabular}
 \caption{Query templates for the three workloads used in the experimental evaluation of \sss and \naive.
 Each parameter in a template is indicated in curly brackets.}
 \label{fig:example-queries}
\end{figure}

In this section, we provide additional details about the workloads used in our experiments
to evaluate \sss and \naive (\Cref{sec:experiments}).
Figure~\ref{fig:example-queries} shows the \spaql query templates for each dataset.
The parameters in the templates are indicated under curly brackets.
Table~\ref{tab:description} provides all the remaining details for the datasets and queries,
including all the query parameters.

{
\renewcommand{\arraystretch}{0.8}
\small

\newcommand\minE{$\min \expe{\cdot}$}
\newcommand\maxE{$\max \expe{\cdot}$}
\newcommand\maxPr{$\max \prob{\cdot}$}

\newcommand\support{Supported}
\newcommand\counter{Counteracted}
\newcommand\independent{Independent}

\newcommand\highvar{High VaR}
\newcommand\lowvar{Low VaR}
\newcommand\short{2-day}
\newcommand\longpred{1-week}
\newcommand\allstocks{All stocks}
\newcommand\volatile{Most volatile}

\newcommand{\normal}{{\textsc{Normal}}}
\newcommand{\pareto}{\textsc{Pareto}}
\newcommand{\expdis}{$\textsc{Exponential}$}
\newcommand{\pois}{\textsc{Poisson}}
\newcommand{\uni}{\textsc{Uniform}}
\newcommand{\gbm}{\textsc{Geometric} \\ \textsc{Brownian} \\ \textsc{Motion}}
\newcommand{\studt}{\textsc{Student's t}}

\setlength{\tabcolsep}{4pt}

\begin{table*}[t]
\centering
\begin{tabular}{@{}lll|cccllll@{}}
\toprule
\multicolumn{3}{c}{\textbf{Dataset}} & \multicolumn{6}{c}{\textbf{Query}} \\
\midrule
 & \textbf{$N$} & \textbf{Uncertainty} &  & \textbf{Feasible?} & \textbf{Objective} & \textbf{Supportiveness} & \textbf{$p$} & \textbf{$v$} & \textbf{Other features} \\
\midrule
\multirow{8}{*}{\rotatebox{90}{\textbf{Galaxy}}} & \multirow{8}{*}{\begin{tabular}[c]{@{}c@{}}$55,000$ to \\ $274,000$\end{tabular}}
    & \normal($\sigma$=2) & Q1 & \multirow{8}{*}{Yes} & \multirow{8}{*}{\minE} & \multirow{2}{*}{\counter} & \multirow{8}{*}{0.9} & $40$ & \\
 &  & \normal($\sigma^*$=3) & Q2 &  &  &  &  & $43$ & \\
 &  & \normal($\sigma$=2) & Q3 &  &  & \multirow{2}{*}{\support} &  & $50$ & \\
 &  & \normal($\sigma^*$=3) & Q4 &  &  &  &  & $52$ &\\
 &  & \pareto($\sigma$=$\alpha$=1) & Q5 &  &  & \multirow{2}{*}{\counter} &  & $65$ & \\
 &  & \pareto($\sigma^*$=$\alpha$=1) & Q6 &  &  &  &  & $65$ & \\
 &  & \pareto($\sigma$=$\alpha$=1) & Q7 &  &  & \multirow{2}{*}{\support} &  & $109$ & \\
 &  & \pareto($\sigma^*$=3, $\alpha$=1)& Q8 &  &  &  &  & $90$ & \\
\midrule
\multirow{8}{*}{\rotatebox{90}{\textbf{Portfolio}}} & \multirow{8}{*}{\begin{tabular}[c]{@{}c@{}}$4,000$ to \\ $14,000$\end{tabular}} & \multirow{8}{*}{\begin{tabular}[c]{@{}l@{}}\gbm\end{tabular}} & Q1 & \multirow{8}{*}{Yes} & \multirow{8}{*}{\maxE} & \multirow{8}{*}{\support}
                     & 0.9 & $-10$ & \short, \allstocks \\
 &  &  & Q2 &  &  &  & 0.95 & $-10$ & \short, \allstocks \\
 &  &  & Q3 &  &  &  & 0.9 & $-10$ & \short, \volatile \\
 &  &  & Q4 &  &  &  & 0.95 & $-10$ & \short, \volatile \\
 &  &  & Q5 &  &  &  & 0.9 & $-1$ & \short, \volatile \\
 &  &  & Q6 &  &  &  & 0.95 & $-1$ & \short, \volatile \\
 &  &  & Q7 &  &  &  & 0.9 & $-10$ & \longpred, \volatile \\
 &  &  & Q8 &  &  &  & 0.9 & $-1$ & \longpred, \volatile \\
 \midrule
\multirow{8}{*}{\rotatebox{90}{\textbf{TPC-H}}} & \multirow{8}{*}{$117,600$}
    & \expdis($\lambda$=1) & Q1 & \multirow{7}{*}{Yes} & \multirow{8}{*}{\maxPr} & \multirow{8}{*}{\independent} & 0.9 & $15$ & D=3 \\
 &  & \expdis($\lambda$=1) & Q2 &  &  &  & 0.95 & $7$ & D=10 \\
 &  & \pois($\lambda$=2) & Q3 &  &  &  & 0.9 & $15$ & D=3 \\
 &  & \pois($\lambda$=1) & Q4 &  &  &  & 0.9 & $10$ & D=10 \\
 &  & \uni(0,1) & Q5 &  &  &  & 0.9 & $15$ & D=3 \\
 &  & \uni(0,1) & Q6 &  &  &  & 0.95 & $7$ & D=10 \\
 &  & \studt($\nu$=2) & Q7 &  &  &  & 0.9 & $29$ & D=3 \\
 &  & \studt($\nu$=2) & Q8 & No &  &  & 0.95 & $7$ & D=10 \\
 \bottomrule
\end{tabular}
\caption{Detailed description of datasets and queries.
For Galaxy, the means of the distributions are always the original data values,
and we thus only indicate the other distribution parameters (standard deviation $\sigma$ and shape $\alpha$);
The standard deviations can be of two kinds: all identical (indicated by $\sigma$),
or all different and randomly generated (indicated by $\sigma^*$);
In the second case, the standard deviations of the tuples were generated randomly using
a normal distribution with mean zero and standard deviation $\sigma^*$, and by then taking their absolute values.
For Portfolio, ``\short'' indicates predictions made only for the following two days,
and ``\longpred'' indicates predictions for an entire week;
``\volatile'' indicates that the dataset only includes the 30\% most volatile stocks.
For \tpch, we indicate the distribution used to model the data integration uncertainty,
and $D$, the number of integrated sources.}
\label{tab:description}
\end{table*}
}

\end{document}